\title{Dynamics of a planar Coulomb gas}
\author{François Bolley}
\address[FB]{LPSM, CNRS UMR 8001, Sorbonne Université - Paris 6, France.}
\email{\url{mailto:francois.bolley@upmc.fr}}
\urladdr{\url{http://www.proba.jussieu.fr/pageperso/bolley/}}
\author{Djalil Chafaï} 
\address[DC]{CEREMADE, CNRS UMR 7534, Université Paris-Dauphine, PSL, France.}
\email{\url{mailto:djalil(at)chafai.net}} %
\urladdr{\url{http://djalil.chafai.net/}}
\author{Joaquín Fontbona}
\address[JF]{CMM, Universidad de Chile, Chile.}
\email{\url{mailto:fontbona@dim.uchile.cl}}
\urladdr{\url{http://www.cmm.uchile.cl/?cmm_people=joaquin-fontbona}}
\date{Summer 2017, revised Winter 2018, compiled \today}
\newtheorem{theorem}{Theorem}[section]%
\newtheorem{proposition}[theorem]{Proposition}%
\newtheorem{lemma}[theorem]{Lemma}%
\newtheorem{remark}[theorem]{Remark}%
\newcommand{\dC}{\mathbb{C}}
\newcommand{\dE}{\mathbb{E}}
\newcommand{\dP}{\mathbb{P}}
\newcommand{\dR}{\mathbb{R}}
\newcommand{\cC}{\mathcal{C}}
\newcommand{\cE}{\mathcal{E}}\newcommand{\cF}{\mathcal{F}}
\newcommand{\cM}{\mathcal{M}}\newcommand{\cN}{\mathcal{N}}
\newcommand{\cP}{\mathcal{P}}
\newcommand{\cS}{\mathcal{S}}
\newcommand{\al}{\alpha}
\newcommand{\be}{\beta}
\newcommand{\De}{\Delta}
\newcommand{\de}{\delta}
\newcommand{\ga}{\gamma}
\newcommand{\Ga}{\Gamma}
\newcommand{\la}{\lambda}
\newcommand{\na}{\nabla}
\newcommand{\si}{\sigma}
\newcommand{\te}{\theta}
\newcommand{\veps}{\varepsilon}
\newcommand{\vphi}{\varphi}
\newcommand{\ABS}[1]{{{\left| #1 \right|}}} 
\newcommand{\DOTP}[2]{#1\cdot#2} 
\newcommand{\PAR}[1]{{{\left(#1\right)}}} 
\newcommand{\pd}{{\partial}} 
\newcommand{\SBRA}[1]{{{\left[#1\right]}}} 
\newcommand{\IND}{\mathbf{1}}
\newcommand{\OL}[1]{\overline{#1}}
\newcommand{\Wt}{\widetilde}
\newcommand{\SSK}[1]{\substack{#1}}
\renewcommand{\Re}{\mathfrak{Re}}
\newcommand{\dd}{\mathrm{d}}
\newcommand{\e}{\mathrm{e}}
\def\@MRExtract#1 #2!{#1}     
\renewcommand{\MR}[1]{
  \xdef\@MRSTRIP{\@MRExtract#1 !}%
  \href{http://www.ams.org/mathscinet-getitem?mr=\@MRSTRIP}{MR-\@MRSTRIP}}
\numberwithin{equation}{section}
\keywords{Coulomb gas; Ginibre Ensemble; Interacting particle system; Poincaré
  inequality; Lyapunov function; McKean\,--\,Vlasov equation;
  Cox\,--\,Ingersoll\,--\,Ross process.}
\subjclass[2000]{82C22; 60K35; 65C35; 60B20}
\begin{document}

\begin{abstract}
  We study the long-time behavior of the dynamics of interacting planar
  Brownian particles, confined by an external field and subject to a singular
  pair repulsion. The invariant law is an exchangeable Boltzmann\,--\,Gibbs
  measure. For a special inverse temperature, it matches the Coulomb gas known
  as the complex Ginibre ensemble. The difficulty comes from the interaction
  which is not convex, in contrast with the case of one-dimensional log-gases
  associated with the Dyson Brownian Motion. Despite the fact that the
  invariant law is neither product nor log-concave, we show that the system is
  well-posed for any inverse temperature and that Poincaré inequalities are
  available. Moreover the second moment dynamics turns out to be a nice
  Cox\,--\,Ingersoll\,--\,Ross process, in which the dependency over the number
  of particles leads to identify two natural regimes related to the behavior
  of the noise and the speed of the
  dynamics. 
\end{abstract}

\maketitle
 
{\footnotesize\tableofcontents}

\section{Introduction and statement of the results}

\subsection{The model and its well-posedness}

This work is concerned with the dynamics of $N\geq2$ particles at positions
$x_1,\ldots,x_N$ in $\dR^d$, $d\geq1$, confined by an external field and
experiencing a singular pair repulsion. The configuration space that we are
interested in is the open subset $D\subset(\dR^d)^N$ defined by
\begin{equation}\label{eq:D}
  D:=(\dR^d)^N\setminus\cup_{i\neq j}\{(x_1,\ldots,x_N)\in(\dR^d)^N:x_i=x_j\}
\end{equation}
where $i,j$ run over $\{1,\ldots,N\}$. The boundary of $D$ in the
compactification of $(\dR^d)^N$ is
\[
\pd D:=\{\infty\}\cup\cup_{i\neq j}\{(x_1,\ldots,x_N)\in(\dR^d)^N:x_i=x_j\}.
\]
The vector $x=(x_1,\ldots,x_N)\in D$ encodes the position of the $N$ particles,
and the energy $H(x)$ of this configuration  is
modeled by
\begin{equation}\label{eq:E}
  H(x):=\frac{1}{N}\sum_{i=1}^N V(x_i) %
  +\frac{1}{2N^2}\sum_{1 \leq i\neq j \leq N}W(x_i-x_j)
  =:{H_V}(x)+{H_W}(x). 
\end{equation}
Here, $V:\dR^d\to\dR\cup\{+\infty\}$ is an external \emph{confinement
  potential} such that $V(z)\to+\infty$ as $z\to\infty$, and
$W:\dR^d\setminus\{0\}\to\dR$ is a \emph{pair or two-body interaction
  potential} such that $W(z)=W(-z)$ and $W(z)\to+\infty$ as $z\to0$
(singularity). Unless otherwise stated, we consider particles in
$\dR^2\equiv\dC$, with quadratic confinement and Coulomb repulsion, namely:
\begin{equation}\label{eq:Ginibre}
  d=2,
  \quad
  V(z)=\ABS{z}^2,
  \quad
  W(z)=\log\frac{1}{\ABS{z}^2}.
\end{equation}
Here $\ABS{z}$ denotes the Euclidean norm of $z\in\dR^2$ (modulus of the
complex number $z$). With this notation, we study the system of $N$
interacting particles in $\dR^2$ modeled by a diffusion process
$X^N={(X^N_t)}_{t\geq0}$ on $D$, solution of the stochastic differential
equation
\begin{equation}\label{eq:SDE}
  \dd X^N_t=\sqrt{2\frac{\al_N}{\be_N}}\dd B^N_t-\al_N\na H(X^N_t)\,\dd t,
\end{equation}
for any choice of speed $\al_N>0$ and inverse temperature $\be_N>0$; here
${(B^N_t)}_{t\geq0}$ is a standard Brownian motion of $(\dR^2)^N$. In other
words, letting $X^N_t={(X^{i,N}_t)}_{1\leq i\leq N}$ and
$B^N_t={(B^{i,N}_t)}_{1\leq i\leq N}$ denote the components of $X^N_t$ and
$B^N_t$,
\[
\dd X^{i,N}_t=\sqrt{2\frac{\al_N}{\be_N}}\dd B^{i,N}_t 
-\frac{\al_N}{N}\na V(X^{i,N}_t)\,\dd t
-\frac{\al_N}{N^2}\sum_{j\neq i}\na W(X^{i,N}_t-X^{j,N}_t)\,\dd t,
\quad 1\leq i\leq N.
\]
Since $V(z)=\ABS{z}^2$ and $W(z)=-2\log\ABS{z}$ we have more explicitly
\begin{equation}\label{processX}
\dd X^{i,N}_t
=\sqrt{2\frac{\al_N}{\be_N}}\dd B^{i,N}_t 
-2\frac{\al_N}{N}X^{i,N}_t\,\dd t
-2\frac{\al_N}{N^2}\sum_{j\neq i}\frac{X^{j,N}_t-X^{i,N}_t}{|X^{i,N}_t-X^{j,N}_t|^2}\,\dd t,
\quad 1\leq i\leq N.
\end{equation}
To lightweight the notations, we will very often drop the notation $N$ in the
superscript, writing in particular $X_t$, $B_t$, $X^i_t$, and $B^i_t$ instead
of $X^N_t$, $B^N_t$, $X^{i,N}_t$ and $B^{i,N}_t$ respectively. We shall see
later that the cases $\be_N=N$ and $\be_N=N^2$ are particularly interesting,
the latter being related to the complex Ginibre Ensemble in random matrix
theory.

Global pathwise well posedness of a solution $X$ to the stochastic
differential equation \eqref{processX} is not automatically granted since $W$
is singular. Nevertheless, the set $D$ is path-connected (see Lemma
\ref{le:connectivity}) and, given an initial condition $X_0$ in $D$, one can
resort to classic stochastic differential equations properties to define, in a
unique pathwise way, the process $X^N$ up to the explosion time
\begin{equation}\label{eq:Texp}
  T_{\pd D}:=\sup_{\veps>0}T_\veps\in[0,+ \infty]. 
\end{equation}
Here, 
\begin{equation*}
  T_\veps
  =\inf\bigg\{t\geq0:\max_{1\leq i\leq N}|X^i_t|\geq\veps^{-1}
  \text{ or }
  \min_{1\leq i\leq N}|X^i_t-X^j_t|\leq\veps
  \bigg\}
\end{equation*}
is the first exit time of a typical compact set in $D$. Then, one can show
that explosion never occurs:

\begin{theorem}[Global well posedness and absence of explosion]\label{th:boum}
  For any $X_0=x\in D$, pathwise uniqueness and strong existence on
  $[0, + \infty)$ hold for the stochastic differential equation \eqref{processX}
  on $[0,+\infty)$, and we have $T_{\pd D}= + \infty$ a.s.
\end{theorem}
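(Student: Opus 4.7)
The plan is to first invoke classical SDE theory to obtain a unique strong solution on $[0,T_{\pd D})$, and then rule out explosion by a Lyapunov argument built on the energy $H$ itself. On the open set $D$ the coefficients of~\eqref{processX} are $C^\infty$ with locally bounded derivatives. Between $0$ and any $T_\veps$ the process evolves in a compact subset of $D$ where the coefficients are globally Lipschitz and of linear growth, so classical theory yields strong existence and pathwise uniqueness on $[0,T_\veps]$; concatenating as $\veps\to 0$ delivers a strong solution on $[0,T_{\pd D})$.

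\textbf{Generator and harmonicity.} The generator of~\eqref{processX} is $\cL=(\al_N/\be_N)\De-\al_N\na H\cdot\na$. The key observation, specific to the planar Coulomb case~\eqref{eq:Ginibre}, is that $\log|\cdot|$ is harmonic on $\dR^2\setminus\{0\}$, so $\De H_W\equiv 0$ on $D$; combined with $\De V\equiv 4$ this gives $\De H\equiv 4$ on $D$, whence
\[
\cL H = \frac{4\al_N}{\be_N} - \al_N|\na H|^2 \leq \frac{4\al_N}{\be_N}.
\]

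\textbf{Lyapunov behaviour and conclusion.} I would then check that $H$ is a genuine Lyapunov function for $\pd D$. Near a collision this is immediate from the $-\log|x_i-x_j|$ term. At spatial infinity, the elementary bound $\log y\leq y-1$ applied to $|x_i-x_j|^2\leq 2(|x_i|^2+|x_j|^2)$ yields a lower bound of the form $H(x)\geq (1/N^2)\sum_i|x_i|^2-C_N$, which proves simultaneously that $H$ is bounded below on $D$ and that $m(\veps):=\inf\{H(x):x\in D,\,x\notin K_\veps\}\to+\infty$ as $\veps\to 0$, where $K_\veps$ is the compact set whose exit time is $T_\veps$. Applying Itô's formula to $H(X_{t\wedge T_\veps})$ is legitimate since $X$ remains in $K_\veps$ up to $T_\veps$ and $H\in C^\infty(D)$; taking expectations kills the martingale part and gives
\[
\dE\big[H(X_{t\wedge T_\veps})\big]\leq H(X_0)+\frac{4\al_N}{\be_N}\,t.
\]
Using $H\geq -C_N$ on $\{T_\veps>t\}$ and $H(X_{T_\veps})\geq m(\veps)$ on $\{T_\veps\leq t\}$ to bound the left-hand side from below forces $\dP(T_\veps\leq t)\to 0$ as $\veps\to 0$, hence $T_{\pd D}=+\infty$ a.s.

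\textbf{Main obstacle.} The real difficulty lies in the Lyapunov step: designing a function that simultaneously tames the confinement at infinity and the singular repulsion at the diagonal. Here the plan works so cleanly only because of the providential harmonicity of $\log|\cdot|$ in the plane; without this cancellation, $\De H_W$ would contribute an unbounded singular term near the diagonal, and a more elaborate Lyapunov candidate (or a Girsanov-type removal of the singular drift, in the spirit of the treatment of Dyson Brownian motion) would be required.
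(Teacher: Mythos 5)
Your overall strategy is exactly the paper's: construct the solution locally up to $T_{\pd D}$ (the paper does this by regularizing $W$ into $W^\veps$, you by localizing on the compacts $K_\veps$ -- equivalent), exploit the harmonicity of $\log|\cdot|$ in the plane to get $\De H=4$ and hence $LH\leq 4\al_N/\be_N$ (the paper uses the slightly weaker bound $LH\leq 4\al_N(1/\be_N+1/N)$ from \eqref{eq:LH}; your bound, which only uses $|\na H|^2\geq0$, is fine), and conclude by the stopped Itô formula plus a Markov inequality, using coercivity of $H$ near $\pd D$ (the paper's Lemma \ref{le:coercivity}, with hitting times of level sets $\{H>R\}$ instead of your $T_\veps$ -- a cosmetic difference).

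The gap is in your coercivity step. First, the inequality you invoke does not give the bound you claim: applying $\log u\leq u-1$ to $u=|x_i-x_j|^2\leq 2(|x_i|^2+|x_j|^2)$ yields
\[
H(x)\;\geq\;\Bigl(\tfrac1N-\tfrac{2(N-1)}{N^2}\Bigr)|x|^2+\tfrac{N-1}{2N}
\;=\;\tfrac{2-N}{N^2}\,|x|^2+\tfrac{N-1}{2N},
\]
whose leading coefficient is $\leq0$ for every $N\geq2$ and strictly negative for $N\geq3$; the crude pair bound spends too much of the confinement. To get a genuine quadratic lower bound one must only spend a fraction of it, e.g.\ use $\log u\leq u/2+\log 2-1$ (which does give $H\geq |x|^2/N^2+c_N$ with your crude pair bound), or, as in the paper, the identity $\tfrac12\sum_{i\neq j}|x_i-x_j|^2=N|x|^2-|\sum_i x_i|^2\leq N|x|^2$ combined with $u/2-\log u\geq 1-\log 2$, giving \eqref{minoH}. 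Second, even the corrected bound $H\geq |x|^2/N^2-C_N$ does \emph{not} "simultaneously" give $m(\veps)\to+\infty$: the region $D\setminus K_\veps$ contains configurations with $\min_{i\neq j}|x_i-x_j|\leq\veps$ but $|x|$ bounded, on which this bound is just a constant. Nor is the collision case "immediate from the $-\log|x_i-x_j|$ term", because the remaining terms $-\log|x_k-x_l|$ can be arbitrarily negative when other particles are far apart. What is needed is the two-regime argument of Lemma \ref{le:coercivity}: given a level $R$, the quadratic bound handles $|x|$ large, and on $\{|x|\leq A\}$ each $-\log|x_k-x_l|$ is bounded below by $-2\log(1+A)$, so the single singular term dominates once $\veps$ is small enough (depending on $R$ and $A$). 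With the coercivity lemma repaired in this way, the rest of your argument (Itô on $[0,t\wedge T_\veps]$, $\dE[H(X_{t\wedge T_\veps})]\leq H(x)+ct$, then $\dP(T_\veps\leq t)\leq (H(x)+ct+C_N)/m(\veps)\to0$) is correct and matches the paper's proof.
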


The absence of explosion provided by Theorem \ref{th:boum} is remarkably
independent of the choice of the inverse temperature, and this is in contrast
with the behavior of the Dyson Brownian motion associated with the
one-dimensional log-gas, see for instance \cite{MR1217451}. The proof of
Theorem \ref{th:boum} is given in Section \ref{se:boum}. It uses the fact that
$W$ is the fundamental solution of the Poisson\,--\,Laplace equation. The main
idea is similar to the one used for other singular repulsion models, such as
in \cite{MR1217451}, or for vortices such as in \cite{MR2343519}, but the
result ultimately relies on quite specific properties of our model
\eqref{processX}. Note also that our particles will never collide and in
particular never collide at the same time, in contrast with for instance the
singular attractive model studied in \cite{fournier-jourdain} -- see also
\cite{cattiaux-pedeches} for the control of explosion using the Fukushima
technology.

\medskip
  
Hence there exists a unique Markov process $X={(X_t)}_{t\geq0}$ solution of
\eqref{eq:SDE}. Its infinitesimal generator $L$ is given for a smooth enough
$f:D\to\dR$ by
\begin{equation}\label{eq:gen}
  L f=\frac{\al_N}{\be_N}\De f-\al_N\DOTP{\na H}{\na f}.
\end{equation}
Here $\De$ and $\na$ are understood in $(\dR^2)^N\equiv\dR^{2N}$ and $u\cdot
v=\langle u,v\rangle$ denotes the Euclidean scalar product. By symmetry of the
evolution, the law of $X_t$ is exchangeable for every $t\geq0$, as soon as it
is exchangeable for $t=0$. Recall that the law of a random vector is
exchangeable when it is invariant by any permutation of the coordinates of the
vector. It is then natural to encode the particle system with its empirical
measure
\[
\mu^N_t=\frac{1}{N}\sum_{i=1}^N\delta_{X^i_t}.
\]

\subsection{Second moment dynamics}

Theorem \ref{th:2ndmoment} gives the evolution of the second moment
\[
H_V(X_t)
=\frac{1}{N}\sum_{i=1}^N|X_t^i|^2
=\int_{\dR^2}\!\ABS{x}^2\,\mu^N_t(\mathrm{d}x)
\]
of $\mu^N_t$. This evolution depends on the choices for $\al_N$ and $\be_N$,
for which meaningful choices are discussed in Section \ref{se:comments}. We
let $W_1$ denote the (Kantorovich\,--)\,Wasserstein transportation distance of
order one defined by $W_1(\mu,\nu)=\inf\ \{\dE[|X-Y|]:X\sim\mu,Y\sim~\nu\}$
for every probability measures $\mu$ and $\nu$ on $\dR$ with finite first
moment.

\begin{theorem}[Second moment dynamics]\label{th:2ndmoment}
  The process ${(H_V(X_t))}_{t\geq0}$ is an ergodic Markov process, equal in
  law to the Cox\,--\,Ingersoll\,--\,Ross process ${(R_t)}_{t\geq0}$ given by the
  unique solution in $[0,\infty)$ of the stochastic differential equation
  \begin{equation}\label{SDECIR}
    \dd R_t =  
    \sqrt{\frac{8 \alpha_N}{ N \beta_N} R_t} \, \dd b_t 
    +4\frac{\al_N}{N} \left[\frac{ N}{\beta_N}+ \frac{N-1}{2N}-R_t \right]\dd t,
  \end{equation}
  where ${(b_t)}_{t\geq0}$ is a real standard Brownian motion. In particular,
  its invariant distribution is the Gamma law $\Ga_N$ on $\dR_+$ with shape
  parameter $N+\frac{N-1}{2N}\be_N$ and scale parameter $\be_N,$ and density
  with respect to the Lebesgue measure on $\dR$ given by
  \[
  r\in\dR\mapsto 
  \gamma_N(r):=
  \frac{\beta_N^{N+\frac{\beta_N(N-1)}{2 N}}}{\Gamma(N+\frac{\beta_N(N-1)}{2 N} )} 
  r^{(N-1)(1+\frac{\beta_N}{2 N})}\e^{-r\beta_N}\mathbf{1}_{r\geq0}.
  \]
  Moreover, for any $t \geq 0$ we have
  \begin{equation}\label{eq:cvCIR}
    W_1(\mathrm{Law}(H_V(X_t)),\Ga_N)
    \leq \e^{-  4 \frac{\al_N}{N}  t}\  W_1(\mathrm{Law}(H_V(X_0)), \Ga_N).
  \end{equation}
  Furthermore for any $x\in D$ and $t\geq0$, we have
  \begin{equation}\label{eq:mom2evol}
    \dE[H_V(X_t)\mid X_0=x]\\
    ={H_V}(x)\,\e^{-4\al_Nt/N} %
    +\PAR{\frac{1}{2}
      +\frac{N}{\be_N}-\frac{1}{2N}}\PAR{1-\e^{-4\al_Nt/N}}.
  \end{equation}
  In particular, as $t\to\infty$, the left-hand sides in~\eqref{eq:cvCIR}
  and~\eqref{eq:mom2evol} converge to $0$ and $1/2+N/\be_N-1/(2N)$
  respectively with a speed independent of $N$ as soon as $\al_N$ is linear in
  $N$.
\end{theorem}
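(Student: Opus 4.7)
The plan is to apply Itô's formula to the semimartingale $|X^i_t|^2$ on each component and exploit a symmetrization identity that makes the singular interaction sum collapse to a constant.

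First, I would write $\dd|X^i_t|^2 = 2 X^i_t\cdot\dd X^i_t + \dd\langle X^i\rangle_t$, plug in \eqref{processX} and note that the quadratic variation contributes $4\alpha_N/\beta_N\,\dd t$ per particle, since each $X^i$ lives in $\dR^2$. Summing and dividing by $N$ gives
\[
\dd H_V(X_t)
=\frac{2}{N}\sqrt{\frac{2\alpha_N}{\beta_N}}\sum_{i=1}^N X^i_t\cdot\dd B^i_t
+\frac{4\alpha_N}{\beta_N}\,\dd t
-\frac{4\alpha_N}{N}H_V(X_t)\,\dd t
-\frac{4\alpha_N}{N^3}\sum_{i\neq j}\frac{X^i_t\cdot(X^j_t-X^i_t)}{|X^i_t-X^j_t|^2}\,\dd t.
\]
The key step is to symmetrize the last sum by swapping $i\leftrightarrow j$: pairing the $(i,j)$ and $(j,i)$ terms yields $(X^i-X^j)\cdot(X^j-X^i)/|X^i-X^j|^2=-1$, so the whole sum equals $-N(N-1)/2$. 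The singularity disappears entirely and the drift becomes $\frac{4\alpha_N}{N}[N/\beta_N+(N-1)/(2N)-H_V(X_t)]\,\dd t$, matching the drift in \eqref{SDECIR}. This cancellation is the crux of the argument and is where I would spend the most care (also checking $X_t\in D$ a.s.\ thanks to Theorem \ref{th:boum}, so all manipulations are licit).

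Next, the martingale part $M_t=\frac{2}{N}\sqrt{2\alpha_N/\beta_N}\sum_i\int_0^t X^i_s\cdot\dd B^i_s$ has quadratic variation $\langle M\rangle_t=\int_0^t\frac{8\alpha_N}{N\beta_N}H_V(X_s)\,\dd s$, so by the Dambis\,--\,Dubins\,--\,Schwarz theorem (enlarging the probability space if necessary) there exists a standard real Brownian motion $(b_t)$ with $\dd M_t=\sqrt{(8\alpha_N/(N\beta_N))H_V(X_t)}\,\dd b_t$. Therefore $H_V(X_t)$ solves the CIR equation \eqref{SDECIR}. Pathwise uniqueness for this equation in $[0,\infty)$ follows from the Yamada\,--\,Watanabe criterion, since the diffusion coefficient is $1/2$-Hölder and the drift is Lipschitz; this gives equality in law with the canonical CIR process $R$.

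The invariant law is identified by solving the stationary Fokker\,--\,Planck equation $\frac12(\sigma^2 r\gamma)''-((a-br)\gamma)'=0$ with $\sigma^2=8\alpha_N/(N\beta_N)$, $a=(4\alpha_N/\beta_N)+2\alpha_N(N-1)/N^2$, $b=4\alpha_N/N$; the unique probability solution is the Gamma law with shape $2a/\sigma^2=N+(N-1)\beta_N/(2N)$ and rate $2b/\sigma^2=\beta_N$, giving the formula for $\gamma_N$. For the $W_1$ contraction \eqref{eq:cvCIR}, I would couple two CIR solutions $R$ and $R'$ with the same $b$ and different initial laws, apply Tanaka's formula to $|R_t-R'_t|$, and use $(\sqrt{R_t}-\sqrt{R'_t})^2\leq|R_t-R'_t|$ to conclude via Yamada\,--\,Watanabe that the local time at $0$ of the difference vanishes, leaving $\frac{\dd}{\dd t}\dE|R_t-R'_t|=-(4\alpha_N/N)\dE|R_t-R'_t|$. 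Finally, \eqref{eq:mom2evol} is obtained by taking expectations in \eqref{SDECIR}, which yields a linear ODE whose explicit solution is precisely the stated convex combination. The stated consequences on the relaxation time and limit follow by inspection, choosing $\alpha_N$ linear in $N$.
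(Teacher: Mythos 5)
Your proposal is correct and follows essentially the same route as the paper: Itô's formula for $|X^i_t|^2$, the symmetrization $(x_i-x_j)\cdot(x_j-x_i)/|x_i-x_j|^2=-1$ that collapses the singular interaction term to the constant $2\al_N(N-1)/N^2$ (this is exactly how \eqref{eq:LHV} is derived in Section~2), identification of the CIR SDE, pathwise uniqueness, and a linear ODE for \eqref{eq:mom2evol}. The only point where you diverge is the $W_1$ bound: you propose a synchronous coupling plus Tanaka's formula and Le Gall's occupation-time argument to kill the local time of $R-R'$ at $0$, whereas the paper runs the classical Yamada\,--\,Watanabe scheme with the explicit $\phi_\ell$ approximants of $|x|$ and then passes to the limit via Gronwall; both are standard realizations of the same Hölder-$1/2$ uniqueness mechanism and yield the identical contraction rate $4\al_N/N$.
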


A Cox\,--\,Ingersoll\,--\,Ross (CIR) process also naturally arises as the
dynamics of the second empirical moment of the vortex system studied in
\cite{MR3308573}. Theorem \ref{th:2ndmoment} is proved in
Section~\ref{se:2ndmoment}.

\subsection{Invariant probability measure and long-time behavior}

Despite the repulsive interaction, the confinement is strong enough to give
rise to an equilibrium. Namely, the Markov process ${(X_t)}_{t\geq0}$ admits a
unique invariant probability measure which is reversible. It is the
Boltzmann\,--\,Gibbs measure $P^N$ on $D\subset (\dR^2)^N$ with density
\begin{equation}\label{eq:PN}
  \frac{\dd P^N(x_1,\ldots,x_N)}{\dd x_1\cdots\dd x_N}
  =\frac{\e^{-\be_NH(x_1,\ldots,x_N)}}{Z_N}  
  =\frac{\e^{-\frac{\be_N}{N}\sum_{i=1}^N \vert x_i \vert^2}}{Z_N}
  \prod_{1 \leq i<j \leq N}\ABS{x_i-x_j}^{\frac{2\be_N}{N^2}}
\end{equation}
where 
\[
Z_N:=\int_D\!\e^{-\be_NH(x_1,\ldots,x_N)}\dd x_1\cdots\dd x_N
\]
is a normalizing constant known as the partition function. Such a
Boltzmann\,--\,Gibbs measure with a Coulomb interaction is called a Coulomb
gas. Actually $H(x)\to+\infty$ when $x\to\partial D$ and $\e^{-\be H}$ is
Lebesgue integrable on $D$ for any $\be>0$, see Lemma \ref{le:coercivity}.
Moreover the density of $P^N$ does not vanish on $D$. One can extend it on
$(\dR^2)^N$ by zero, seeing $P^N$ as a probability measure on $(\dR^2)^N$.
Since the domain $D$ and the function $H$ are both invariant by permutation of
the $N$ particles, the law $P^N$ is exchangeable. The behavior of $P^N$ relies
crucially on the ``inverse temperature'' $\be_N$. The choice $\be_N=N^2$ gives
a determinantal structure to $P^N$ which is known in this case as the complex
Ginibre ensemble in random matrix theory. As we will see in Section
\ref{ss:beta}, there is another interesting regime which is $\be_N=N$.

\bigskip

In Theorem \ref{th:Lyapunov} below we quantify the long time behavior of our
Markov process $X^N$ via a Poincaré inequality for its invariant measure
$P^N$. Recall that if $S$ is an open subset of $\dR^n$ and $\cF$ is a class of
smooth functions on $S$, then a probability measure $\mu$ on $S$ satisfies a
Poincaré inequality on $\cF$ with constant $c>0$ if for every $f\in\cF$,
\begin{equation}\label{eq:PI}
  \mathrm{Var}_\mu(f):=\dE_\mu(f^2)-(\dE_\mu f)^2
  \leq c\, \dE_\mu(\ABS{\nabla f}^2)
  \quad\text{where}\quad
  \dE_{\mu}(f):=\int\!f\,\dd\mu,
\end{equation}
see \cite{MR2352327} for instance. If $f$ is the density with respect to $\mu$
of a probability measure $\nu$ then the quantity
$\mathrm{Var}_\mu(f)=\dE_\mu(|f-1|^2)$ is nothing else but the chi-square
divergence $\chi(\nu\|\mu)$.

\begin{theorem}[Poincaré inequality]\label{th:Lyapunov}
  Let $\cF$ be the set of $\cC^\infty$ functions $f:D\to\dR$ with compact
  support in $D$, in the sense that the closure of $\{x\in D:f(x)\neq0\}$ is
  compact and is included in $D$. Then for any $N$, the probability measure
  $P^N$ on $(\dR^2)^N$ satisfies a Poincaré inequality on $\cF$ with a
  constant which may depend on $N$.
\end{theorem}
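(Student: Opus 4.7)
The plan is to apply the Lyapunov function criterion of Bakry--Barthe--Cattiaux--Guillin: given a $C^2$ function $\phi:D\to[1,\infty)$, constants $\lambda>0$ and $b\geq 0$, and a compact set $K\subset D$ satisfying
$$L\phi\leq -\lambda\phi + b\,\mathbf{1}_K\quad\text{on }D,$$
together with a Poincaré inequality for $P^N$ restricted and renormalized to a bounded open neighborhood $U\supset K$ with $\overline U\subset D$, the probability measure $P^N$ itself satisfies a Poincaré inequality on $\cF$. This criterion cleanly decouples the singular boundary analysis from a local, nondegenerate one.

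For the Lyapunov function I take $\phi:=\e^{\theta H}$ with some $\theta\in(0,\beta_N)$. From \eqref{eq:gen} one computes
$$\frac{L\phi}{\phi}=\alpha_N\theta\left[\frac{\Delta H}{\beta_N}-\Big(1-\frac{\theta}{\beta_N}\Big)|\nabla H|^2\right].$$
Two points underlie the Lyapunov property. First, $\Delta H\equiv 4$ pointwise on $D$: indeed $\Delta V=4$ and $z\mapsto-\log|z|^2$ is harmonic on $\dR^2\setminus\{0\}$, so the interaction contributes nothing. Second, $|\nabla H(x)|\to\infty$ as $x\to\partial D$: at spatial infinity this follows from the confining term $\frac{2}{N}x_i$ in $\nabla_{x_i}H$; near an isolated pair collision $|x_i-x_j|\to0$ (the remaining particles being at comparably larger distance), the singular term $\frac{2}{N^2}\nabla W(x_i-x_j)$ appearing in $\nabla_{x_i}H-\nabla_{x_j}H$ forces the gradient to blow up like $1/(N^2|x_i-x_j|)$; for simultaneous cluster collisions of the form $x_i=x_\star+\veps y_i$, a rescaling argument using the $-1$ homogeneity of $\nabla W$ yields $|\nabla H|\gtrsim 1/\veps$ unless the profile $(y_i)$ is an equilibrium of the planar logarithmic Coulomb few-body problem, which does not exist. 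Since $1-\theta/\beta_N>0$, we conclude $L\phi/\phi\to-\infty$ at $\partial D$, so the Lyapunov inequality holds on a suitable compact $K\subset D$ for any prescribed $\lambda>0$.

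For the local Poincaré inequality, pick a bounded open $U\subset D$ with Lipschitz boundary containing $K$ and satisfying $\overline U\subset D$. The density of $P^N$ given by \eqref{eq:PN} is smooth, positive, and bounded above and below on $\overline U$, so by the Holley--Stroock perturbation principle (starting from the Poincaré inequality for the normalized Lebesgue measure on the bounded Lipschitz domain $U$), $P^N|_U$ satisfies a Poincaré inequality. The conclusion now follows from the standard BBCG computation: for $f\in\cF$, an integration by parts (legitimate because $\SUPP f$ is compact in $D$) combined with reversibility of $L$ with respect to $P^N$ yields
$$\int_D \frac{-L\phi}{\phi}\,f^2\,\dd P^N\leq\frac{\alpha_N}{\beta_N}\int_D|\nabla f|^2\,\dd P^N,$$
and plugging in the Lyapunov inequality along with the local Poincaré inequality on $U$ produces a bound of the form $\mathrm{Var}_{P^N}(f)\leq C_N\int_D|\nabla f|^2\,\dd P^N$ for a constant $C_N$ depending on $N,\lambda,b$ and the local Poincaré constant. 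The main obstacle is the quantitative blow-up of $|\nabla H|$ at $\partial D$ uniformly over all approach modes, which ultimately rests on the absence of equilibria for the planar logarithmic Coulomb few-body problem.
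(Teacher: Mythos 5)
Your overall architecture is exactly the paper's: the Lyapunov function $\phi=\e^{\theta H}$ with $\theta\in(0,\beta_N)$, the identity $\Delta H=4$ on $D$ (harmonicity of $W$), the BBCG Lyapunov-plus-local-Poincar\'e criterion, and a local Poincar\'e inequality coming from the smooth positive density of $P^N$ on compact subsets of $D$; your computation of $L\phi/\phi$ is correct. The load-bearing ingredient, however, is the claim that $|\nabla H|\to\infty$ as $x\to\partial D$ \emph{uniformly over all modes of degeneration}, and this is where your argument has a genuine gap. Your case analysis (spatial infinity; isolated pair collision; a single cluster collapsing at one rate $\veps$) is not exhaustive: you must also handle collisions superimposed on escape to infinity (where the confining term $\frac{2}{N}x_i$ is itself large and could a priori cancel the singular interaction term for some of the particles), and multi-scale collapses (clusters within clusters shrinking at different rates), for which the hypothesis ``the remaining particles being at comparably larger distance'' fails; reducing an arbitrary sequence $x^{(n)}\to\partial D$ to one of your model regimes requires a compactness/extraction argument that you do not give. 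Moreover the statement on which your cluster case rests --- nonexistence of equilibria of the planar logarithmic few-body problem --- is asserted, not proved. It is true, and follows from the virial identity $\sum_{i}y_i\cdot\sum_{j\neq i}\frac{y_i-y_j}{|y_i-y_j|^2}=\binom{k}{2}>0$ for $k\geq2$ distinct points, but you would need to state it and propagate it quantitatively through the rescaling to get the lower bound $|\nabla H|\gtrsim 1/\veps$ you invoke.

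The paper closes precisely this gap, and does so quantitatively rather than by contradiction: expanding $|\nabla H|^2$ exactly as in \eqref{eq:naE2}, the cross term between the confinement and interaction gradients equals the constant $-4\frac{N-1}{N^2}$, because $\nabla V$ is linear and $z\cdot\nabla W(z)=-2$; so no cancellation between the two forces is possible in any regime. The purely interacting part is then bounded below by $\sum_{i\neq j}|x_i-x_j|^{-2}$ via the algebraic inequality \eqref{eq:bolley-lemma}, proved by regrouping the sum over triples and applying Cauchy--Schwarz. This yields the gradient coercivity of Lemma \ref{le:coercivity-nabla},
\[
|\nabla H(x)|^2\;\geq\;\frac{4}{N^2}|x|^2+\frac{4}{N^4}\sum_{i\neq j}\frac{1}{|x_i-x_j|^2}-4\frac{N-1}{N^2},
\]
which covers all approach modes (mixed collision-plus-escape, multi-scale collapses) in one stroke and is exactly the uniform blow-up your sketch presupposes. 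To complete your proof you should either establish a quantitative bound of this type or carry out the rescaling/compactness argument in full; as written, the key step is not established.
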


By \eqref{eq:gen}, the invariance of $P^N$ gives
\[
-\dE_{P^N}(fL f)=\frac{\al_N}{\be_N}\dE_{P^N}(\ABS{\na f}^2)
\]
where $\ABS{\na f}^2=\sum_{i=1}^n\ABS{\na_{x_i}f}^2$ for $\na
f=\PAR{\nabla_{x_i}f}_{1\leq i\leq N}$ in $(\dR^2)^N$. Let $P^N_t$ be the law
of $X^N_t$ in $(\dR^2)^N$. Up to determining a dense class of test functions
stable by the dynamics, it is classical, see \cite[Sec.~3.2]{MR2352327}
or~\cite{MR3155209}, that the Poincaré inequality \eqref{eq:PI} for $P^N$ with
constant $c=c_N$ imply the exponential convergence of $P^N_t$ to $P^N$, namely
\[
  \chi(P^N_t\| P^N)
  \leq\e^{-\frac{2t}{c_N}\frac{\al_N}{\be_N}}
  \chi(P^N_0\| P^N).
\]
More precisely, provided we already know that $P^N_t$ has a smooth density
$f^N_t$, we have
\[
\frac{\dd}{\dd t}\mathrm{Var}_{P^N}(f^N_t)
=\frac{\dd}{\dd t}\int\!(f^N_t)^2\,\dd P^N
=2\int\!f^N_tLf^N_t\,\dd P^N
\leq -2\frac{\al_N}{\be_Nc_N}\mathrm{Var}_{P^N}(f^N_t).
\]
Theorem \ref{th:Lyapunov} is proved in Section \ref{se:Lyapunov}. Poincaré
inequalities can classically be proved by spectral decomposition,
tensorization, convexity, perturbation, or Lipschitz deformation arguments,
see \cite{MR3155209}. None of these approaches seem to be available for $P^N$.

\begin{remark}[Eigenvector]\label{rm:eigenvector}
  It turns out that $H_V$ is up to an additive constant an eigenvector of $L$.
  Namely, from \eqref{eq:LHV} we get
  \[
    LU=-\frac{4\al_N}{N}U
    \quad\text{where}\quad
    U:=H_V-\frac{N}{\be_N}-\frac{N-1}{2N}.
  \]
  This fact is the key of the proof of Theorem \ref{th:2ndmoment}. However,
  due to the varying sign of $U$, we do not know how to use $U$ with the
  Lyapunov method to get a Poincaré inequality.
\end{remark}

\begin{remark}[Tensorization]
  The invariant measure $P^N$ of $X$ is not product, in contrast for instance
  with the case of vortex models with constant intensity studied in
  \cite{MR2343519}.
\end{remark}

\begin{remark}[Convexity]
  Neither the domain $D$ nor the energy $H:D\to\dR$ are convex, see
  Proposition \ref{pr:noconvex}, and thus the law $P^N$ is not log-concave.
  Remarkably, for one-dimensional log-gases, one can order the particles,
  which has the effect of producing a convex domain instead of $D$ on which
  $H$ is convex, and in this case $P^N$ satisfies in fact a logarithmic
  Sobolev inequality which is stronger, see for instance the forthcoming book
  \cite{book-erdos-yau} and also \cite{chafai-blog} for the optimal Poincaré
  constant. Here $d=2$ and the one dimensional trick is not available.
\end{remark}

\begin{remark}[Lipschitz deformation]
  The law $P^N$ is not a Lipschitz deformation of the Gaussian law on
  $\cM_N(\dC)$. Actually, the map which to $M\in\cM_N(\dC)$ associates its
  eigenvalues in $\dC$ is not Lipschitz. To see it take $M,M'\in\cM_n(\dC)$
  with $M_{j,j+1}=1$ for $j=1,\ldots,n-1$ and $M_{jk}=0$ otherwise, and
  $(M'-M)_{jk}=\veps$ if $(j,k)=(n,1)$ and $(M'-M)_{jk}=0$ otherwise. Then the
  eigenvalues of $M'-M$ are 
  \[
  \{\veps^{1/n}\e^{2ik\pi/n}:0\leq k\leq n-1\},
  \]
  while the Hilbert\,--\,Schmidt norm and operator norm of $M'-M$ are both
  equal to $\veps$. Note that in contrast, this map is Lipschitz for Hermitian
  matrices and more generally for normal matrices; this statement is known as
  the Hoffman\,--\,Wielandt inequality \cite{MR1288752}.
\end{remark}

The proof of Theorem \ref{th:Lyapunov} is based on a Lyapunov function and as
usual this does not provide in general a good dependence on $N$. Of course it
is natural to ask about the dependence in $N$ and in $\al_N$ and $\be_N$ of
the best constant in Theorem \ref{th:Lyapunov} and, specifically, if
convergence to equilibrium can be expected to hold at a rate that does not
depend on $N$, as in \cite{MR1847094}. Theorem \ref{th:Bobkov}
below and the previous Theorem \ref{th:2ndmoment} and Remark
\ref{rm:eigenvector} constitute steps in that direction.

\begin{theorem}[Uniform Poincaré inequality for the one particle
  marginal]\label{th:Bobkov}
  If $\be_N=N^2$ then the one-particle marginal law $P^{1,N}$ of $P^N$ on
  $\dR^2$ satisfies a Poincaré inequality, with a constant which does not
  depend on $N$. In particular, the smallest (i.e.\ best) constant for $P^N$
  is bounded below uniformly in $N$.
\end{theorem}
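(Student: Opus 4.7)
The plan is to identify $P^{1,N}$ as the one-point intensity of the complex Ginibre ensemble, verify that it is a log-concave measure on $\dR^2$, and then invoke Bobkov's spectral gap estimate for spherically symmetric log-concave probability measures. Since $\be_N=N^2$, the density of $P^N$ in \eqref{eq:PN} is, up to normalization, that of the Ginibre eigenvalue distribution, and the determinantal structure built on the monomial basis $\{z^k\}_{k\ge0}$ orthogonal in $L^2(\dC,e^{-N|z|^2}\dd z)$ gives the explicit density
\[
\rho_N(z)=\frac{1}{\pi}\,\psi_N(N|z|^2),\qquad \psi_N(v):=e^{-v}S_N(v),\qquad S_N(v):=\sum_{k=0}^{N-1}\frac{v^k}{k!}.
\]
A direct computation, or alternatively Theorem~\ref{th:2ndmoment} combined with exchangeability, yields the uniform bound $\int_{\dR^2}|z|^2\,\dd P^{1,N}(z)=(N+1)/(2N)\le1$.

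The heart of the argument is to show that $\rho_N$ is log-concave on $\dR^2$. I will rely on the classical reduction: if $\psi:[0,\infty)\to(0,\infty)$ is non-increasing and $\log\psi$ is concave, then $z\mapsto\psi(|z|^2)$ is log-concave on $\dR^d$, because $\log\psi(|z|^2)$ is the composition of a concave non-increasing function with the convex map $z\mapsto|z|^2$. Monotonicity of $\psi_N$ is immediate from $\psi_N'(v)=-e^{-v}v^{N-1}/(N-1)!\le0$. For the concavity of $\log\psi_N$, one computes $(\log\psi_N)''(v)=(\log S_N)''(v)=\bigl(S_{N-2}(v)S_N(v)-S_{N-1}(v)^2\bigr)/S_N(v)^2$, so it suffices to establish the polynomial log-concavity $S_{N-1}(v)^2\ge S_{N-2}(v)S_N(v)$ on $[0,\infty)$. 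Writing $S_N=S_{N-1}+v^{N-1}/(N-1)!$ and $S_{N-1}=S_{N-2}+v^{N-2}/(N-2)!$ reduces the difference to the closed form
\[
S_{N-1}^2-S_{N-2}S_N \;=\; \frac{v^{N-2}}{(N-1)!}\Bigl[(N-1)+\sum_{k=1}^{N-2}\frac{N-1-k}{k!}\,v^k\Bigr],
\]
which is manifestly non-negative.

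Once log-concavity of $\rho_N$ on $\dR^2$ is in hand, I will invoke Bobkov's spectral gap theorem for spherically symmetric log-concave probability measures on $\dR^n$, which bounds the Poincaré constant by a universal multiple of $\dE_\mu[|x|^2]/n$. Taking $n=2$ and using the uniform second moment bound yields a Poincaré constant for $P^{1,N}$ independent of $N$. For the concluding ``in particular'' assertion, the Poincaré inequality for $P^N$ tested on coordinate functions $f(x_1,\ldots,x_N)=g(x_1)$ reduces, by exchangeability of $P^N$, to the Poincaré inequality for $P^{1,N}$; hence the best constant of $P^N$ is at least that of $P^{1,N}$. A uniform positive lower bound on the latter is easy: testing with a smooth truncation of $\Re(z)$ yields $c^*(P^{1,N})\ge\mathrm{Var}_{P^{1,N}}(\Re(z))=(N+1)/(4N)\ge1/4$.

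The main obstacle is the passage from the radial profile $\psi_N$ to log-concavity of $\rho_N$ on the full plane $\dR^2$: the argument crucially exploits that $\psi_N$ is simultaneously non-increasing \emph{and} log-concave in $v$, neither property alone being sufficient. The underlying polynomial inequality $S_{N-1}^2\ge S_{N-2}S_N$ is in turn very specific to the Ginibre kernel and would not persist for other inverse temperatures, which is consistent with the restriction $\be_N=N^2$ in the statement.
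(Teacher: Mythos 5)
Your proposal is correct and follows essentially the same route as the paper: explicit Ginibre formula for $\vphi^{1,N}$, log-concavity via the radial profile (your Turán-type inequality $S_{N-1}^2\geq S_{N-2}S_N$ is exactly the paper's statement $g''\geq 0$ for $g(t)=t-\log\sum_{\ell<N}t^\ell/\ell!$, and your monotonicity of $\psi_N$ is its $g'\geq0$), a uniform second-moment bound $(N+1)/(2N)$, and then Bobkov's spectral gap criterion (the paper uses the Kannan--Lovász--Simonovits-type statement for log-concave measures in its main proof and the spherically symmetric version, which you invoke, in a following remark). Your treatment of the ``in particular'' clause via coordinate test functions and the explicit bound $\mathrm{Var}_{P^{1,N}}(\Re z)=(N+1)/(4N)\geq 1/4$ is a welcome addition, as the paper leaves that step implicit.
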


Theorem \ref{th:Bobkov} is proved in Section \ref{se:Bobkov}. 

\bigskip

Although the measure $P^N$ is not product, at least in the regime $\be_N=N^2$
a product structure arises asymptotically as $N$ goes to infinity. More
precisely, for $k \leq N$, let $P^{k,N}$ be the $k$-th dimensional marginal
distribution of the exchangeable probability measure $P^N$, as in
\eqref{eq:phikn}; then, in the regime $\be_N=N^2$, we have
 \begin{equation}\label{eq:decouplage}
   P^{k,N}-(P^{1,N})^{\otimes k} \to 0, \qquad N \to \infty
  \end{equation}
  weakly with respect to continuous bounded functions. It follows from
  Theorem~\ref{th:chaos} below.

\bigskip\bigskip

\begin{theorem}[Chaoticity]\label{th:chaos}
  Let $\beta_N = N^2$ and let $\mu_\infty$ be the uniform distribution on the
  unit disc $\{z\in\dC:|z|\leq1\}$ with density $\vphi_{\infty}(z) =
  \pi^{-1}\mathbf{1}_{\ABS{z}\leq1}$. For every fixed $k\geq1$,
  \[
  P^{k,N} \to \mu_\infty^{\otimes k}, \qquad N\to\infty
  \]
  weakly with respect to continuous and bounded functions. Moreover, denoting
  $\vphi^{k,N}$ the density of the marginal distribution $P^{k,N}$, as defined
  in \eqref{eq:phikn}, we have
  \[
  \vphi^{1,N} \to \vphi_{\infty}
  \quad \textrm{and} \quad
  \vphi^{2,N} \to \vphi_{\infty}^{\otimes 2}, \qquad   N\to\infty
  \]
  uniformly on compact subsets of   respectively
  \[
  \{z \in \dC: |z|\neq1\} \quad \textrm{and} \quad
  \{(z_1,z_2)\in\dC^2:|z_1|\neq1,|z_2|\neq1,z_1\neq z_2\}.
  \]
\end{theorem}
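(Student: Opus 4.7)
\textbf{Proof plan for Theorem \ref{th:chaos}.}
At $\be_N=N^2$, formula \eqref{eq:PN} reduces to
\[
\vphi^N(z_1,\ldots,z_N)=\frac{1}{Z_N}\,\e^{-N\sum_i|z_i|^2}\prod_{i<j}|z_i-z_j|^2,
\]
which is precisely the (rescaled) eigenvalue density of the complex Ginibre ensemble and is classically known to define a determinantal point process on $\dC$ with correlation kernel, with respect to the Lebesgue measure,
\[
K_N(z,w):=\frac{N}{\pi}\,\e^{-N(|z|^2+|w|^2)/2}\sum_{\ell=0}^{N-1}\frac{(Nz\bar w)^\ell}{\ell!},
\]
obtained by Gram--Schmidt on the monomials $z^\ell$ in $L^2(\e^{-N|z|^2}\dd z)$. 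Consequently,
\[
\vphi^{k,N}(z_1,\ldots,z_k)=\frac{(N-k)!}{N!}\,\det\bigl(K_N(z_i,z_j)\bigr)_{1\le i,j\le k},
\]
which is the identity from which everything will be extracted.

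The next step is the asymptotic analysis of $K_N$. On the diagonal,
\[
\vphi^{1,N}(z)=\frac{K_N(z,z)}{N}=\frac{1}{\pi}\,\dP\bigl(\mathrm{Poi}(N|z|^2)\le N-1\bigr),
\]
and standard Chernoff bounds for Poisson variables deliver convergence to $\vphi_\infty(z)=\pi^{-1}\mathbf{1}_{|z|<1}$, exponentially fast and uniformly on compact subsets of $\{|z|\neq 1\}$, with the uniform bound $\vphi^{1,N}\le \pi^{-1}$ throughout $\dC$. Off the diagonal, I would compare $K_N(z,w)$ with the ``infinite'' kernel $K_\infty(z,w):=(N/\pi)\,\e^{Nz\bar w -N(|z|^2+|w|^2)/2}$ of modulus $(N/\pi)\,\e^{-N|z-w|^2/2}$: bounding the cutoff error by $(N/\pi)\,\e^{-N(|z|-|w|)^2/2}\dP(\mathrm{Poi}(N|z||w|)\ge N)$ gives an $(N/\pi)\,\e^{-c(\eta)N}$ correction whenever $|z||w|<1-\eta$; for $\max(|z|,|w|)>1+\eta$ the Cauchy--Schwarz inequality $|K_N(z,w)|^2\le K_N(z,z)K_N(w,w)$ combined with the diagonal vanishing gives again exponential smallness. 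Expanding the determinant in $\vphi^{k,N}$, the diagonal product dominates and the remaining terms vanish, producing the uniform convergence $\vphi^{k,N}\to\vphi_\infty^{\otimes k}$ on compact subsets of $\{|z_i|\neq 1,\,z_i\neq z_j\}$ claimed in the theorem.

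To upgrade this into weak convergence $P^{k,N}\to\mu_\infty^{\otimes k}$ against continuous bounded test functions, tightness comes for free from Theorem \ref{th:2ndmoment}: at equilibrium with $\be_N=N^2$, \eqref{eq:mom2evol} yields $\dE_{P^{1,N}}[|z|^2]=\tfrac{1}{2}+\tfrac{1}{N}-\tfrac{1}{2N}$, bounded in $N$, and Markov's inequality combined with exchangeability controls $P^{k,N}(\max_i|z_i|>R)$ uniformly in $N$. Moreover, Hadamard's inequality for the positive semidefinite kernel matrix gives $\det(K_N(z_i,z_j))\le\prod_iK_N(z_i,z_i)\le(N/\pi)^k$, so the densities $\vphi^{k,N}$ are uniformly bounded in $N$ by a constant depending only on $k$. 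For any $f\in C_b(\dC^k)$ and $\veps>0$ I would then split $\int f\,\vphi^{k,N}\dd z$ into a contribution over a compact subset of $\{|z_i|\neq 1,\,z_i\neq z_j\}\cap\{\max_i|z_i|\le R\}$ where uniform density convergence applies, plus a contribution over its complement whose Lebesgue measure can be made arbitrarily small by shrinking neighborhoods of $\{|z_i|=1\}\cup\{z_i=z_j\}$ and enlarging the ball; the uniform bound on $\vphi^{k,N}$ then makes this second contribution $O(\|f\|_\infty\veps)$. The main technical delicacy lies in the kernel estimate within a neighborhood of the unit circle $|z|=1$, where the Poisson partial sum transitions from nearly $\e^\lambda$ to nearly $0$ on the microscopic scale $N^{-1/2}$; this is precisely why the uniform density convergence must be stated away from $\{|z|=1\}$. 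Since the target $\mu_\infty^{\otimes k}$ charges neither the circle nor the diagonal, these exclusions do not affect the weak limit.
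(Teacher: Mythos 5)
Your proposal is correct, and its second half (the local uniform convergence of $\vphi^{1,N}$ and $\vphi^{2,N}$) is essentially the paper's argument: both rest on the determinantal formula \eqref{eq:phikn} and on tail estimates for the truncated exponential series $\e_N$, which you phrase as Chernoff bounds for $\dP(\mathrm{Poi}(N|z|^2)\leq N-1)$ and the paper phrases as the Stirling-type bound of Lemma \ref{le:exp}; your Cauchy--Schwarz/Hadamard handling of the off-diagonal kernel entry is interchangeable with the paper's direct estimate \eqref{enrn}. Where you genuinely diverge is the first claim, $P^{k,N}\to\mu_\infty^{\otimes k}$ for every fixed $k$. The paper does not go through densities at all there: it invokes the almost sure convergence \eqref{eq:cirlaw} of the empirical measure (an external input from the large deviation literature), feeds it into the abstract equivalence of Theorem \ref{th:theoA} (Sznitman's characterization of chaoticity), and deduces convergence of all marginals in one soft step. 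You instead argue directly from the kernel: a Hadamard bound $\det(K_N(z_i,z_j))\leq\prod_iK_N(z_i,z_i)\leq(N/\pi)^k$ giving a uniform-in-$N$ bound on $\vphi^{k,N}$, a determinant expansion showing $\vphi^{k,N}\to\vphi_\infty^{\otimes k}$ locally uniformly off the circle and diagonals (you need this for all $k$, whereas the paper only proves it for $k\leq2$), tightness from the bounded second moment of $P^{1,N}$, and a splitting of $\int f\,\vphi^{k,N}$ into a good compact set plus a bad set of small Lebesgue measure not charged by $\mu_\infty^{\otimes k}$. This is a valid alternative: it is more self-contained and quantitative (no appeal to the LDP or to Theorem \ref{th:theoA}), at the price of carrying the $k$-point determinant estimate for general $k$, which you only sketch ("the diagonal product dominates") but which does go through since every non-identity permutation term contains at least two exponentially small off-diagonal factors against the polynomial prefactor $(N-k)!/N!\sim N^{-k}$. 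Two cosmetic points: the limiting one-point density should be understood as $\pi^{-1}\IND_{|z|<1}$ only up to the circle (the pointwise limit on $|z|=1$ is $1/(2\pi)$, which is why the uniform statement excludes the circle, as you note); and for the second-moment bound it is cleaner to quote the stationary computation \eqref{eq:mom2} (or the invariant Gamma law of Theorem \ref{th:2ndmoment}) rather than the dynamical formula \eqref{eq:mom2evol}, though either yields the bound you need.
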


Theorem \ref{th:chaos} is proved in~Section \ref{se:chaos}. Note that the
convergence of $\vphi^{1,N}$ cannot hold uniformly on arbitrary compact sets
of $\dC$ since the pointwise limit is not continuous on the unit circle.
Moreover the convergence of $\vphi^{2,N}$ cannot hold on
$\{(z,z):z\in\dC, |z|<1\}$ since, by~\eqref{eq:phikn}, $\vphi^{2,N}(z,z)=0$
for any $N\geq2$ and $z\in\dC$ while
$\vphi^{1,N}(z)\vphi^{1,N}(z) \to_N 1/\pi^2\neq0$ when $|z|<1$, and this
phenomenon is due to the singularity of the interaction.

The case $\be_N=N^2$ is related to random matrix theory, see Section
\ref{ss:beta}. To our knowledge, Theorem \ref{th:Bobkov} and
Theorem~\ref{th:chaos} have not appeared previously in this domain.

\medskip

\subsection{Comments and open problems}\label{se:comments}

\subsubsection{Inverse temperature}\label{ss:beta}

Following \cite{MR3262506}, there are two natural regimes $\be_N=N$ and
$\be_N=N^2$.

\begin{itemize}
\item \textbf{Random matrix theory regime: $\be_N=N^2$.} This is natural from
  the point of view of random matrices. Namely let $M$ be a random $N\times N$
  complex matrix with independent and identically distributed Gaussian entries
  on $\dC$ with mean $0$ and variance $1/N$ with density
  $z\in\dC\mapsto\pi^{-1}N\exp(-N\ABS{z}^2)$. The variance scaling is chosen
  so that by the law of large numbers, asymptotically as $N\to\infty$, the
  rows and the columns of $M$ are stabilized: they have unit norm and are
  orthogonal. The density of the random matrix $M$ is proportional to
  \[
  M\mapsto
  \prod_{1\leq j,k\leq N}\exp\Bigr(-N\ABS{M_{jk}}^2\Bigr)
  =\exp\PAR{-N\mathrm{Tr}(MM^*)}.
  \]
  The spectral change of variables $M=U(D+N)U^*$, which is the Schur unitary
  decomposition, gives that the joint law of the eigenvalues of $M$ has
  density
  \begin{equation}\label{eq:phinn}
    \vphi^{N,N}(z_1,\ldots,z_n)
    :=\frac{N^{\frac{N(N+1)}{2}}}{1!2!\cdots N!}
    \frac{\e^{-\sum_{i=1}^N N\ABS{z_i}^2}}{\pi^N}
    \prod_{i<j}\ABS{z_i-z_j}^2
  \end{equation}
  with respect to the Lebesgue measure on $\dC^N$. This law is usually
  referred to as the ``complex Ginibre Ensemble'', see
  \cite{MR0173726,MR2641363,MR2932638,MR2908617}. This matches $P^N$ with
  \eqref{eq:Ginibre} with $\be_N=N^2$ so that the density of $P^N$ on
  $(\dR^2)^N=\dC^N$ can be written as
  \begin{equation}\label{eq:pnphinn}
    \frac{\dd P^N(z_1,\ldots,z_N)}{\dd z_1\cdots \dd z_N}
    =\vphi^{N,N}(z_1,\ldots,z_N).
  \end{equation}
  It is a well known fact -- see \cite[p.~271]{MR2129906},
  \cite[p.~150]{MR841088}, or \cite{MR2641363,MR2552864} -- that for every
  $1\leq k\leq N$, the $k$-th dimensional marginal distribution $P^{k,N}$ of
  $P^N$ has density
  \begin{align}
    \vphi^{k,N}(z_1,\ldots,z_k)
    &=\int_{\dC^{N-k}}\!\vphi^{N,N}(z_1,\ldots,z_N)\,\dd z_{k+1}\cdots \dd z_N
    \nonumber\\
    &=\frac{(N-k)!}{N!}\frac{\e^{-N(|z_1|^2+\cdots+|z_k|^2)}}{\pi^kN^{-k}}
    \det\SBRA{(\e_N(Nz_i\OL{z}_j))_{1\leq i,j\leq k}},\label{eq:phikn}
  \end{align}
  where $\e_N(w):=\sum_{\ell=0}^{N-1}w^\ell/\ell!$ is the truncated
  exponential series. The energy $H$ is a quadratic functional of the
  empirical measure $\mu^N:=\frac{1}{N}\sum_{i=1}^N\de_{x_i}$ of the
  particles:
  \begin{align}\label{eq:EmuN}
    H(x_1,\ldots,x_N)
    &=\int\!V(x)\,\mu^N(\dd x)
    +\frac{1}{2}\iint_{\neq}\!W(x-y)\,\mu^N(\dd x)\mu^N(\dd y)\\
    &=:\cE_{\neq}(\mu^N)\nonumber
  \end{align}
  where ``$\neq$'' indicates integration outside the diagonal. Since $\be_N\gg
  N$ as $N\to\infty$, under ${(P^N)}_N$, the sequence of empirical measures
  ${(\mu^N)}_N$ satisfies a large deviation principle with speed ${(\be_N)}_N$
  and good rate function $\cE-\inf\cE$ where $\cE$ is given for nice
  probability measures $\mu$ on $\dR^2$ by
  \[
  \cE(\mu):=\int\!V(x)\,\mu(\dd x) %
  +\frac{1}{2}\iint\!W(x-y)\,\mu(\dd x)\mu(\dd y).
  \]
  See for instance \cite{MR1606719,MR1746976,MR3262506} and references
  therein. The functional $\cE$ is strictly convex where it is finite, lower
  semi-continuous with compact level sets, and it achieves its global minimum
  for a unique probability measure $\mu_\infty$ on $\dR^2$, which is the
  uniform distribution on the unit disc with density
  $z\in\dC\mapsto\pi^{-1}\mathbf{1}_{\ABS{z}\leq1}$. From the large deviation
  principle it follows that almost surely
  \begin{equation}\label{eq:cirlaw}
    \mu^N\underset{N\to\infty}{\longrightarrow}\mu_\infty:=\arg\inf\cE
  \end{equation}
  weakly, regardless of the way we put ${(P^N)}_N$ in the same probability
  space.

\item \textbf{Crossover regime: $\be_N=N$.} In this case $P^N$ has density
  proportional to
  \[
  (x_1,\ldots,x_N)\in D%
  \mapsto \e^{-\sum_{i=1}^N\ABS{x_i}^2}\prod_{1 \leq i<j \leq N}\ABS{x_i-x_j}^{\frac{2}{N}}.
  \]
  We do not have a determinantal formula as in \eqref{eq:phikn}, and this gas
  is not associated with a standard random matrix ensemble. It is a two
  dimensional analogue of the one dimensional gas studied in
  \cite{2012PhRvL.109i4102A} leading to a Gauss-Wigner crossover. Following
  \cite{MR3262506}, we can expect that under ${(P^N)}_N$ the sequence of
  empirical measures ${(\mu^N)}_N$ satisfies a large deviation principle with
  speed ${(\be_N)}_N$ and rate function $\Wt\cE-\inf\Wt\cE$; here $\Wt\cE$ is
  given for every probability measure $\mu$ on $\dR^2$ by
  \[
  \Wt\cE(\mu):=\cE(\mu)-\cS(\mu)
  \quad\text{where}\quad
  \cS(\mu):=-\int\!\frac{\dd\mu}{\dd x}\log\frac{\dd\mu}{\dd x}\,\dd x
  \]
  when $\mu$ is absolutely continuous with respect to the Lebesgue measure,
  while $\cS(\mu):=+\infty$ otherwise. $\cS$ is the so-called
  Boltzmann\,--\,Shannon entropy. The minimizer of $\Wt\cE$ is no longer
  compactly supported but can still be characterized by Euler\,--\,Lagrange
  equations, and is a crossover between the uniform law on the disc and the
  standard Gaussian law on $\dR^2$. See \cite{MR3262506} for the link with
  Sanov's large deviation principle.
\end{itemize}

\subsubsection{Dyson Brownian Motion}

If we start with an $N\times N$ random matrix 
\[
M_t={(M_t^{j,k})}_{1\leq j,k\leq N}
\]
with i.i.d.\ entries following the diffusion
$\dd M_t^{j,k}=\dd B_t^{j,k}-N\na V(M_t^{j,k})\,\dd t$ then the eigenvalues in
$\dC=\dR^2$ of $M_t$ will not match our diffusion $X$ solution of
\eqref{eq:SDE}. This is due to the fact that $M_t$ is not a normal matrix in
the sense that $M_tM_t^*\neq M_t^*M_t$ with probability one as soon as $M_t$
has a density. In fact the Schur unitary decomposition of $M_t$ writes
$M_t=U_tT_tU_t^*$ where $U_t$ is unitary and $T_t=D_t+N_t$ is upper
triangular, $D_t$ is diagonal, and $N_t$ is nilpotent. The dynamics of $D_t$
is perturbed by $N_t$. The dynamics \eqref{eq:SDE} is not the analogue of the
Dyson Brownian motion, the process of the eigenvalues associated with the
Gaussian Unitary Ensemble, the one-dimensional log-gas studied in
\cite{MR2760897,xiangdongli}. We refer to \cite{MR3514219,bourgade-dubach} and
references therein for more information on this topic.

\subsubsection{Initial conditions}

In the case of the one-dimensional log-gas known as the Dyson Brownian Motion,
the stochastic differential equation still admits a unique strong solution
when the particles coincide initially. This is proved in
\cite[Prop.~4.3.5]{MR2760897} by crucially using the ordered particle system.
Unfortunately, it does not seem possible to extend such an argument to higher
dimensions. But it is likely that at least weak well-posedness should still
hold for our model.

\subsubsection{Arbitrary dimension, confinement, and interaction}

As in \cite{MR3262506}, many aspects should remain valid in arbitrary
dimension $d\geq2$, with a Coulomb repulsion and a more general
confinement~$V$. For instance, by analogy with the case without interaction
studied in \cite[Th.~2.2.19]{MR2352327}, it is natural to expect that Theorem
\ref{th:boum} remains valid beyond the quadratic confinement case, for example
in the quadratic ``dispersive'' case $V(x)=-|x|^2$, and in confined cases for
which $V(x)\to+\infty$ as $x\to\infty$ with polynomial growth. Nevertheless,
our choice is to entirely devote the present article to the two-dimensional
quadratic confinement case: this model is probably the richest in structure,
notably due to its link with the Ginibre Coulomb gas, which is a remarkable
exactly solvable model.

The model with non-singular interaction has extensively been studied in
arbitrary dimension, in relation with McKean\,--\,Vlasov equations, see
\cite{MR1847094,MR1431299,MR1108185} and references therein. The model in
dimension $d=1$ with logarithmic singular interaction has also extensively
been studied, see for instance
\cite{MR1440140,MR1698948,MR1872742,MR2062570,xiangdongli} and references
therein. See also \cite{berman-onnheim}.

\subsubsection{Logarithmic Sobolev inequality and other functional inequalities}

It is natural to ask whether $P^N$ satisfies a logarithmic Sobolev inequality,
which is stronger than the Poincaré inequality with half the same constant,
see \cite{MR1845806,MR3155209}. Indeed, for $P^N$, a Lyapunov approach is
probably usable by following the lines of \cite[Proof of
Prop.~3.5]{cattiaux-guillin}, see also \cite{MR2498560}, but there are
technical problems due to the shape of $D$ which comes from the singularity of
the interaction. Observe that the one-particle marginal $P^{1,N}$ satisfies
indeed a logarithmic Sobolev inequality with a constant uniform in $N$, as
mentioned in Remark \ref{rk:P1NLSI} after the proof of Theorem
\ref{th:Bobkov}.

Still about functional inequalities, the study of concentration of measure for
Coulomb gases in relation with Coulomb transport inequalities is considered in
the recent work \cite{chafai-hardy-maida}.

\subsubsection{Mean-field limit}\label{Mean-field limit}

In the regime $\beta_N = N^2$, by~\eqref{eq:cirlaw}
the empirical measure $\mu^N$ under $P^N$ tends to $\mu_{\infty}$ as
$N\to\infty$. 
More generally, when
the law of $X_0$ is exchangeable and for general $\beta_N$, one can ask about the behavior of the
empirical measure of the particles
$\mu^N_t:=\frac{1}{N}\sum_{i=1}^N\de_{X_t^i}$ as $N\to\infty$ and as
$t\to\infty$. This corresponds to study the following scheme:
\[
P^N_t \underset{t\to\infty}{\longrightarrow} P^N
\quad\text{and}\quad
\begin{array}{ccc}
  \mu^N_t &  \underset{t\to\infty}{\longrightarrow} & \mu^N \\
  \downarrow& & \downarrow\\
  \mu_{t} & \underset{t\to\infty}{\longrightarrow} & \mu_{\infty}
\end{array}
\]
for a suitable deterministic limit $\mu_{\infty}$.

At fixed $N$, the limit $\lim_{t\to\infty}P^N_t=P^N$, valid for an arbitrary
initial condition $X_0=x$, corresponds to the ergodicity phenomenon for the
Markov process $X$, quantified by the Poincaré inequality of Theorem
\ref{th:Lyapunov}. By the mean-field structure of \eqref{processX} and
\eqref{eq:gen}, it is natural to expect that if
\[
\si:=\lim_{N\to\infty}\frac{\al_N}{\be_N}\in[0,+ \infty)
\]
then the sequence ${({(\mu^N_t)}_{t\geq0})}_{N}$ converges, as a continuous
process with values in the space of probability measures in $\dR^2$, to a
solution of the following McKean\,--\,Vlasov partial differential equation
with singular interaction:
\begin{equation}\label{eq:MV}
  \pd_t\mu_t =\si\De\mu_t+\na\cdot((\na V+\na W*\mu_t)\mu_t).
\end{equation}
The convergence of
${({(\mu_t^N)}_{t\geq0})}_N$ can be thought of as a sort of law of large
numbers.
This is well understood in the
one-dimensional case with logarithmic interaction, see for instance
\cite{MR1217451,MR1440140}, using tightness and characterization of the
limiting laws. However the uniqueness arguments used in one-dimension are no
longer valid for our model, and different ideas need to be developed, see
\cite{coulsim2}. We also refer to \cite{duerinckx} and references therein for the analysis of similar
evolution equations without noise and confinement. 

Theorem \ref{th:2ndmoment} suggests to take $\al_N=N$. Let us comment on the
couple of special cases already considered in our large deviation principle
analysis of ${(P^N)}_N$: $\be_N=N^2$ and $\be_N=N$, when $\al_N=N$.

\begin{itemize}
\item\textbf{Random matrix theory regime with vanishing noise: $\al_N=N$ and
    $\be_N=N^2$.} In this case $\si=0$ and the limiting McKean\,--\,Vlasov
  equation \eqref{eq:MV} does not have a diffusive
  part. 
  Since $\al_N=N$ we have a constant speed for the second moment evolution.
  Since $\be_N=N^2$ we have explicit determinantal formulas for $P^N$ from the
  complex Ginibre Ensemble \eqref{eq:phikn}. The absence of diffusion implies
  that if we start from an initial state $\mu_0$ which is supported in a line,
  then $\mu_t$ will still be supported in this line for any $t\in[0,\infty)$,
  and will thus never converge as $t\to\infty$ to the uniform distribution on
  the unit disc of the complex plane. In particular, the long time equilibrium
  depends clearly on the initial condition.
\item\textbf{Crossover regime with non-vanishing noise: $\al_N=N$ and
    $\be_N=N$.} In this case $\si=1$ and the McKean\,--\,Vlasov equation
  \eqref{eq:MV} has a diffusive term. This regime is also considered in
  \cite{MR1145596,MR1362165} for instance, see also \cite{MR3254330}. The
  Keller\,--\,Segel model studied in
  \cite{fournier-jourdain,cattiaux-pedeches} is the analogue with an
  attractive interaction instead of repulsive.
\end{itemize}

\section{Useful formulas}
\label{se:formulas}

In this section we gather several useful formulas related to the energy $H =
H_V + H_W$ and the operator $L,$ defined in~\eqref{eq:E} and~\eqref{eq:gen}
respectively. Recall that $V(z) = \vert z \vert^2$ and
$W(z)=-2\log\ABS{z}$ on $\dR^2 \setminus\{0\}$, giving 
\[
\na V(z)=2z
\quad\text{and}\quad
\na W(z)=-\frac{2z}{\ABS{z}^{2}}.
\]
Moreover we let $\vert x \vert^2 = \sum_{i=1}^N \vert x_i \vert^2$ for $x =
(x_1, \dots, x_N) \in (\dR^2)^N$.

\medskip
\noindent
{\bf Gradient.} 
By~\eqref{eq:E}, for any $x\in D$ and $i\in\{1,\ldots,N\}$,
\begin{equation}\label{eq:naHV}
\na_{x_i} H_V(x) = \frac{1}{N}\na V(x_i) = \frac{2}{N} x_i
\end{equation}
and
\begin{equation}\label{eq:naHW}
\na_{x_i} H_W(x)
=
\frac{1}{N^2}\sum_{j\neq i}\na W(x_i-x_j)
=
-\frac{2}{N^2}\sum_{j\neq i}\frac{x_i-x_j}{\ABS{x_i-x_j}^2}.
\end{equation}

\medskip
\noindent {\bf Hessian.} By~\eqref{eq:naHV}-\eqref{eq:naHW}, for any $x\in D$
and $i,j\in\{1,\ldots,N\}$,
\[
\na^2_{x_i,x_j} H_V(x) %
=\begin{cases}
  \frac{1}{N}\na^2 V(x_i) %
  &\text{if $i=j$}\\
0
  &\text{if $i\neq j$}
\end{cases}\\
\]
and
\[
\na^2_{x_i,x_j} H_W(x) %
=\begin{cases}
  +\frac{1}{N^2}\sum_{k\neq i}\na^2 W(x_i-x_k)
  &\text{if $i=j$}\\
  -\frac{1}{N^2}\na^2 W(x_i-x_j)
  &\text{if $i\neq j$}.
\end{cases}\\
\]
This gives
\begin{equation}\label{eq:hessE}
  \na^2 H_V=\frac{2}{N}I_{2N} %
  \qquad \textrm{and} \qquad %
  \na^2 H_W = \frac{1}{N^2}A
\end{equation}
where $I_{2N}$ is the $2N\times 2N$ identity matrix and $A$ is a $N\times N$
bloc matrix with diagonal and off-diagonal $2\times 2$ blocs
\[
A_{i,i}=\sum_{k\neq i}\na^2W(x_i-x_k),
\quad
A_{i,j}=-\na^2W(x_i-x_j), \quad i\neq j.
\]

\medskip
\noindent {\bf Operator.} The generator $L$ defined in~\eqref{eq:gen} on
functions $f: D \to \dR$ is given by
 \begin{eqnarray}\label{eq:Lbis}
  L f(x)
  &=&
  \frac{\al_N}{\be_N}\De f(x)
  -\frac{\al_N}{N}\sum_{i=1}^N\na V(x_i)\cdot\na_{x_i}f(x)
  -\frac{\al_N}{N^2}\sum_{1 \leq i\neq j \leq N}\na W(x_i-x_j)\cdot\na_{x_i} f(x) %
  \nonumber\\
 &=&
\frac{\al_N}{\be_N}\De f(x)
-2\frac{\al_N}{N}\sum_{i=1}^Nx_i\cdot\na_{x_i}f(x)
+2\frac{\al_N}{N^2}\sum_{1 \leq i\neq j \leq N}
\frac{(x_i-x_j)\cdot\na_{x_i} f(x)}{|x_i-x_j|^2}.  
\end{eqnarray}

Let us compute now $LH_V$, $LH_W$, and $LH$. First of all, since $\nabla W$ is
odd, we get by symmetrization from \eqref{eq:naHV}-\eqref{eq:naHW} that
\begin{align}
  \ABS{\na H}^2(x) %
  &=\frac{1}{N^2}\sum_{i=1}^N\bigr|\na V(x_i)\bigr|^2
  +\frac{1}{N^4}\sum_{i=1}^N\Bigr|\sum_{j\neq i}\na W(x_i-x_j)\Bigr|^2
  \nonumber\\
  & \qquad +\frac{1}{N^3}\sum_{i\neq j} %
  \PAR{\na V(x_i)-\na V(x_j)} \cdot \na W(x_i-x_j)\nonumber\\
  &=\frac{4}{N^2}\ABS{x}^2
  +\frac{4}{N^4}\sum_{i=1}^N\Bigr|%
  \sum_{j\neq i}\frac{x_i-x_j}{\ABS{x_i-x_j}^2}\Bigr|^2
  -4\frac{N-1}{N^2}.
 \label{eq:naE2}
\end{align}
Moreover, from \eqref{eq:hessE} and $\De W=0$ on $D$, we get
\begin{equation}\label{eq:deE}
  \De H_W(x)=0
  \quad\text{and}\quad
  \De H(x)
  =\De H_V(x)
  =\sum_{i=1}^N\frac{1}{N}\De V(x_i)
  =4.
\end{equation}
By~\eqref{eq:naHV}-\eqref{eq:naHW} and by symmetry we also have
\begin{align}\label{eq:LHV}
  L {H_V}(x)
  &=\frac{\al_N}{\be_N}\De {H_V}(x)-\al_N\na H(x)\cdot\na {H_V}(x)\nonumber\\
  &=4\frac{\al_N}{\be_N}-\frac{4\al_N}{N^2}\ABS{x}^2%
  +\frac{4\al_N}{N^3}%
  \sum_{i=1}^N\sum_{j\neq i}\frac{x_i-x_j}{\ABS{x_i-x_j}^2}\cdot x_i\nonumber\\
  &=4\frac{\al_N}{\be_N}-\frac{4\al_N}{N^2}\ABS{x}^2
  +\frac{2\al_N}{N^3}%
  \sum_{1 \leq i\neq j \leq N}\frac{x_i-x_j}{\ABS{x_i-x_j}^2}\cdot(x_i-x_j)\nonumber\\
  &=4\frac{\al_N}{\be_N}+ 2 \al_N\frac{N-1}{N^2} -\frac{4\al_N}{N}{H_V}(x)
\end{align}  
and likewise
\begin{align}
  L{H_W}(x)
  &=\frac{\al_N}{\be_N}\De H_W(x)-\al_N\nabla H(x)\cdot\nabla H_W(x)\nonumber\\
  &=2\al_N \frac{N-1}{N^2} -4\frac{\al_N}{N^4} \sum_{i=1}^N\bigg|\sum_{j\neq
    i}\frac{x_i-x_j}{\ABS{x_i-x_j}^2}\bigg|^2\label{eq:LHW}
\end{align}
From \eqref{eq:LHV} and \eqref{eq:LHW} we finally get
\begin{align}
  LH(x)
  &= \, L{H_V}(x)+L{H_W}(x)\nonumber\\
  &=
  4\frac{\al_N}{\be_N} %
  +4\al_N\bigg(\frac{N-1}{N^2}-\frac{1}{N}{H_V}(x)-\frac{1}{N^2}\sum_{i=1}^N%
  \bigg|\frac{1}{N}\sum_{j\neq
    i}\frac{x_i-x_j}{\ABS{x_i-x_j}^2}\bigg|^2\bigg).\label{eq:LH}
\end{align}

Note that the fact that the singular repulsion potential $W$ is the
fundamental solution of the diffusion part $\De$ simplifies the expression of
$LH$, in contrast with the situation in dimension $1$ studied in
\cite[p.~559]{MR1217451}, see also \cite{xiangdongli}.

\section{Proof of Theorem \ref{th:boum}}
\label{se:boum}

\begin{lemma}[Connectivity]\label{le:connectivity}
  The set $D$ defined by \eqref{eq:D} is path-connected in $(\dR^2)^N$.
\end{lemma}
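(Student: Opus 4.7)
The plan is to connect any two points of $D$ by a piecewise-linear path with a single intermediate vertex, chosen by a Lebesgue-measure genericity argument. The key structural input is that, after identifying $(\dR^2)^N$ with $\dR^{2N}$, the set $D$ is the complement of the finite union $S:=\bigcup_{i<j}S_{ij}$, where each
\[
S_{ij}:=\{z\in(\dR^2)^N:z_i=z_j\}
\]
is an affine subspace of real codimension $2$ in $\dR^{2N}$. This codimension-$2$ feature is what makes the argument work in dimension $d=2$, whereas it would fail on the real line where the analogous diagonals have codimension $1$ (forcing one to work on a Weyl chamber instead, as for one-dimensional log-gases).

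Given $x,y\in D$, the first step is to reduce the problem to showing the existence of a point $w\in D$ such that both closed segments $[x,w]$ and $[y,w]$ are entirely contained in $D$; their concatenation then yields the desired continuous path from $x$ to $y$ inside $D$.

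The second step is to produce such a $w$ by noting that each ``bad'' set has Lebesgue measure zero. Indeed, a segment $[x,w]$ intersects $S_{ij}$ precisely when there exist $p\in S_{ij}$ and $t\in[0,1]$ with $(1-t)x+tw=p$. Since $x\in D$ rules out the case $t=0$, the set of $w$ for which $[x,w]\cap S_{ij}\neq\emptyset$ coincides with the image of the smooth map
\[
S_{ij}\times(0,1]\longrightarrow\dR^{2N},\qquad (p,t)\longmapsto \frac{p-(1-t)x}{t}.
\]
Its domain has dimension $(2N-2)+1=2N-1<2N$, so the image is Lebesgue-negligible in $\dR^{2N}$. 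The same holds with $y$ in place of $x$, and the set $S$ itself is also a finite union of codimension-$2$ subspaces, hence negligible. Taking the union over the finitely many pairs $i<j$ produces a set of Lebesgue measure zero in $\dR^{2N}$, whose complement is nonempty and contains a suitable $w\in D$.

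I do not foresee a genuine obstacle: the only point requiring a little care is handling the two segments simultaneously and excluding the parameter value $t=0$ in the smooth parametrization, both of which are immediate from the above. The entire argument is elementary and relies on nothing beyond the codimension of the collision set.
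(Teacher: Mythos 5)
Your proof is correct, and it takes a genuinely different route from the paper's. The paper argues by induction on $N$: starting from a path for the first $N$ particles, it constructs a path for particle $N+1$ so that the intersection times with the other trajectories form a finite set, and then perturbs locally ``by varying the speed'' to eliminate each intersection; the codimension-two property of the diagonals is what makes both steps possible in $d=2$. You instead give a direct, non-inductive transversality argument: you observe that $D$ is the complement of finitely many affine subspaces $S_{ij}$ of codimension $2$ in $\dR^{2N}$, you reduce to finding a single intermediate vertex $w$ with $[x,w]\cup[y,w]\subset D$, and you obtain such a $w$ because each bad set is the image of a smooth map from the $(2N-1)$-dimensional manifold $S_{ij}\times(0,1]$ and therefore Lebesgue-negligible. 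What your approach buys is a cleaner and more explicit conclusion (a two-segment piecewise-linear path exists between any two points of $D$) and it sidesteps the somewhat informal ``modify the path locally at the intersection times'' step in the paper, replacing it with a standard measure-zero-image fact. Both arguments hinge on the same geometric input --- the collision set has real codimension $d\geq2$ --- and both therefore extend verbatim to any $d\geq2$ while correctly failing for $d=1$, where $\dim S_{ij}\times(0,1]=N$ is no longer smaller than $N$.
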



\begin{proof}[Proof of Lemma \ref{le:connectivity}]
  It suffices to show that for any $x:=(x_1,\ldots,x_N)\in D$ and
  $y:=(y_1,\ldots,y_N)\in D$, there exists a continuous map
  $\gamma:=(\gamma_1,\ldots,\gamma_N):[0,1]\mapsto D$ such that $\gamma(0)=x$
  and $\gamma(1)=y,$ which must be understood as the position in time of $N$
  moving particles in space. This corresponds to move a cloud of $N$ distinct
  and distinguishable particles into another cloud of $N$ distinct and
  distinguishable particles. Let us proceed by induction on $N$. The property
  is immediate for $N=1$. Suppose that $N\geq1$ and assume that one has
  already constructed $t\in[0,1]\mapsto(\gamma_1(t),\ldots,\gamma_N(t))$. One
  can first construct $\gamma_{N+1}$ in such a way that
  $\{t\in[0,1]:\gamma_{N+1}(t)\in\{\gamma_1(t),\ldots,\gamma_N(t)\}\}$ is a
  finite set. Second, one may modify the path $\gamma_{N+1}$, locally at the
  intersection times by varying the speed, in order to make this set empty.
  This is possible since $d=2$, and possibly impossible if $d=1$ since a
  particle cannot bypass another one.
\end{proof}

\begin{lemma}[Coercivity]\label{le:coercivity}
  For any fixed $N$, we have $H\geq 0$, 
  \[
  \lim_{x\to\partial D}H(x)=+\infty,
  \]
  and $\e^{-\be H}$ is Lebesgue integrable on $D$ for any $\be>0$.
\end{lemma}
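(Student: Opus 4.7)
The plan is to deduce all three assertions from a single elementary lower bound of the form
\[
H(x) \geq \frac{\ABS{x}^2}{N^2} + c_N, \qquad x \in D,
\]
with $c_N \geq 0$ a constant depending only on $N$. Such a bound simultaneously yields $H \geq 0$, provides growth at infinity, and gives the Gaussian majorization $\e^{-\be H(x)} \leq \e^{-\be c_N}\e^{-\be \ABS{x}^2/N^2}$ needed for integrability; only the finite-collision part of the coercivity claim will require a short separate argument.

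To obtain the bound I would estimate each logarithm from above by combining the inequality $\ABS{x_i - x_j}^2 \leq 2(\ABS{x_i}^2 + \ABS{x_j}^2)$ with $\log u \leq u - 1$ for $u > 0$, which gives $2\log\ABS{x_i - x_j} \leq \log 2 - 1 + \ABS{x_i}^2 + \ABS{x_j}^2$. Summing over $1 \leq i < j \leq N$ and using $\sum_i \ABS{x_i}^2 = N H_V(x)$ produces
\[
H(x) = H_V(x) + H_W(x) \geq \frac{H_V(x)}{N} + \frac{N-1}{2N}(1 - \log 2),
\]
which, since $1 - \log 2 > 0$, proves $H \geq 0$ and delivers the announced lower bound with $c_N = \frac{N-1}{2N}(1 - \log 2) \geq 0$.

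For $\lim_{x \to \pd D} H(x) = +\infty$, the lower bound above handles any sequence with $\ABS{x^{(n)}} \to \infty$. For a finite limit point $x^* \in \pd D$, I would work in a bounded neighborhood of $x^*$ and split the pairs $(i,j)$ into \emph{collisional} ones (with $x^*_i = x^*_j$) and \emph{non-collisional} ones. Non-collisional pairs contribute a uniformly bounded quantity since $\ABS{x_i - x_j}$ stays both bounded and bounded away from $0$ on such a neighborhood, whereas each collisional pair contributes $-\frac{2}{N^2}\log \ABS{x_i - x_j} \to +\infty$. Because $x^* \in \pd D$ forces the existence of at least one collisional pair, and because all divergent terms have the same sign, $H(x^{(n)}) \to +\infty$.

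Integrability then follows at once: the Gaussian majorization combined with the fact that $D$ has full Lebesgue measure in $(\dR^2)^N$ gives
\[
\int_D \e^{-\be H}\,\dd x \leq \e^{-\be c_N}\int_{(\dR^2)^N} \e^{-\be \ABS{x}^2/N^2}\,\dd x = \e^{-\be c_N}\PAR{\frac{\pi N^2}{\be}}^N < \infty.
\]
The only mildly delicate point in the plan is dealing with simultaneous multiple collisions for the second assertion, but the uniform sign of the divergent logarithmic contributions makes this essentially automatic.
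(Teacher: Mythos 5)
Your proof is correct and follows essentially the same strategy as the paper: derive a lower bound $H(x)\geq a_N|x|^2+c_N$ by absorbing the logarithms into the quadratic confinement via a $\log u\leq u-\text{const}$ inequality, read off nonnegativity and Gaussian integrability, and then treat finite collision limit points separately. The paper obtains a slightly sharper constant by using the exact identity $\tfrac12\sum_{i\neq j}|x_i-x_j|^2=N|x|^2-|\sum_i x_i|^2\leq N|x|^2$ in place of your $|x_i-x_j|^2\leq 2(|x_i|^2+|x_j|^2)$, and it phrases the divergence near collisions quantitatively (explicit $A$ and $\varepsilon$ for a given $R$) rather than by the compactness argument you sketch, but these are cosmetic differences.
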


\begin{proof}[Proof of Lemma \ref{le:coercivity}]
  Let $x = (x_1, \dots, x_N)$ in $D$. Then
  \[
  \frac{1}{2} \sum_{i \neq j} \vert x_i - x_j \vert^2 
  =  
  \frac{1}{2} \sum_{i, j=1}^N \vert x_i - x_j \vert^2 
  =
  N \sum_{i=1}^N \vert x_i \vert^2 - \Bigr\vert \sum_{i=1}^N x_i \Bigr\vert^2 
  \leq N|x|^2
  \]
  so for $u_{ij} = \vert x_i - x_j \vert^2 $ it holds
  \[
  2N^2 \, H(x) 
  = N \vert x \vert^2 
  + N \vert x \vert^2 
  - \sum_{i \neq j} \log u_{ij} \geq N \vert x \vert^2 
  + \sum_{i \neq j} \Big( \frac{u_{ij}}{2} 
  - \log u_{ij} \Big).
  \]
  But $u/2 - \log u \geq 1 - \log(2) \geq 1/4$ for all $u>0$, so
  \begin{equation}\label{minoH} 
    H(x) %
    \geq \frac{\vert x \vert^2}{2N} + \frac{1}{2N^2}\frac{N(N-1)}{4} %
    \geq \frac{\vert x \vert^2}{2N} + \frac{1}{16} \cdot
  \end{equation}
  In particular $H \geq 0$ and $\e^{-\be H}$ is Lebesgue integrable on $D$ for
  any $\be>0$.

  We now prove that $H(x)\to+\infty$ as $x\to\partial D$. It suffices to show
  that for any $R>0$ there exists $A>0$ and $\veps>0$ such that $H(x)\geq R$
  as soon as $\max_{1\leq i\leq N}\ABS{x_i}\geq A$ or $\min_{1\leq i\neq j\leq
    N}\ABS{x_i-x_j}\leq\veps$. First, let us fix $R>0$. Then,
  by~\eqref{minoH}, $H(x) \geq R$ as soon as $\vert x \vert^2 \geq 2NR$,
  giving such an $A$.
  
  Then, for $\veps>0$ to be chosen later, assume that for some $i\neq j$ we
  have $\ABS{x_i-x_j} \leq\veps.$ Then, by definition of $H(x)$,
  \[
  N^2H(x)\geq
  2\log\frac{1}{\ABS{x_i-x_j}}
  +\sum_{\SSK{1 \leq k\neq l \leq N\\\{k,l\}\neq\{i,j\}}}\log\frac{1}{\ABS{x_k-x_l}}.
  \]
  We can assume that $\max_{1\leq r\leq N}\ABS{x_r}\leq A$ otherwise we have
  already seen that $H(x)\geq R$. Hence, for any $(k,l)$ with $k\neq l$
  we have 
  \[
  \log\frac{1}{\ABS{x_k-x_l}}
  \geq-\log(1+\ABS{x_k})-\log(1+\ABS{x_l}) %
  \geq -2\log(1+A)
  \]
  using the inequality $|a-b|\leq(1+|a|)(1+|b|)$ for $a,b\in\dC.$ As a
  consequence
  \[
  N^2H(x)\geq -2\log\veps-2N^2\log(1+A),
  \]
  which is $\geq R$ for a small enough $\veps$.
\end{proof}

In the sequel we use the notation $\dP_x=\dP[\,\cdot\mid X_0=x]$ and $\dE_x =
\dE[\,\cdot \mid X_0=x]$.

\begin{proof}[Proof of Theorem \ref{th:boum}]
  We first construct the process $X$ starting in $D$ up to its explosion time.
  Given an initial condition $x\in D$, for each $\varepsilon\in (0,
  \min_{1\leq i\leq N}|x^i-x^j| )$ we consider a smooth function
  $W^{\varepsilon}$ on $\dR^2$ coinciding with $W$ on $\{z\in \dR^2: \,
  |z|\geq \varepsilon\}$ and we set
  \[
  H^{\varepsilon}=H_V+H_{W^{\varepsilon}}.
  \]  
  Given a Brownian motion $B$ in a fixed probability space we let
  $X^{\varepsilon}$ denote the unique pathwise solution to the stochastic
  differential equation
  \begin{equation}\label{eq:SDE_vareps}
    \dd X^{\varepsilon}_t
    =\sqrt{2\frac{\al_N}{\be_N}}\dd B_t
    -\al_N\na H^{\varepsilon}(X^{\varepsilon}_t)\,\dd t, \qquad X^{\varepsilon}_0 = x.
  \end{equation}
  Notice that for $\varepsilon'\in (0,\varepsilon]$, the processes
  $X^{\varepsilon}$ and $X^{\varepsilon'}$ coincide up to the stopping time
  \[
    T^{\varepsilon,\varepsilon'}
    =\inf\{s\geq 0: %
    \min_{i\neq j}  |(X^{\varepsilon'}_s)^i - (X^{\varepsilon'}_s)^j| %
    \leq \varepsilon\}.
  \]
  For each $\varepsilon \in (0, \min_{1\leq i\leq N}|x^i-x^j| )$ we can thus
  unambiguously define a stopping time $T^{\varepsilon}$ and a process $X$ on
  $[0,T^{\varepsilon}]$, setting
  $T^{\varepsilon}=T^{\varepsilon,\varepsilon'}$ and $X=X^{\varepsilon'} $ for
  any $\varepsilon'\in (0,\varepsilon)$. By continuity, we have
  $T^{\varepsilon'}>T^{\varepsilon}$ a.s., and so $X$ is uniquely defined up
  to the stopping time $ T_{\pd D}$ defined in \eqref{eq:Texp}. On the other
  hand, the process $X$ satisfies equation \eqref{processX} on each interval
  $[0, T^{\varepsilon} )$ and hence on $ [0, T_{\pd D})$ too. Thus, we just
  have to prove that $ T_{\pd D}=\infty$ a.s.

  Given $R>0$, define the stopping times
  \[
    T'_R:=\inf\{t\geq0:H(X_t)>R\}\in[0,\infty],
    \quad\text{and}\quad
    T':=\lim_{R\to\infty}T'_R=\sup_{R>0}T'_R\in[0,\infty].
  \]
  Lemma \ref{le:coercivity} gives
  $\{T'=\infty\}\subset\{T_{\partial D}=\infty\}$: indeed on $\{T'=\infty\}$,
  for every $t\geq0$ we have $\sup_{s\in[0,t]}H(X_s)<\infty$ ; by Lemma
  \ref{le:coercivity} this means that $T_{\partial D}=\infty$.
  
  Let us now show that $\dP_x (T' = \infty) = 1$. Thanks to \eqref{eq:LH}, we
  have $LH\leq c$ on $D$ for $c= 4 \alpha_N (1/\beta_N + 1/N).$ Moreover,
  given $R \geq 1$ and proceeding as in the end of the proof of Lemma
  \ref{le:coercivity} we can choose $\varepsilon <e^{- CR N^2 \log N}$ for a
  numerical constant $C$ such that the function $H$ (respectively $LH$)
  coincides with $H^{\varepsilon}$ (respectively $LH^{\varepsilon}$) along the
  trajectory of $X$ on the interval $[0, T'_R]$; we can therefore apply the
  Itô formula to $X_{t\wedge T'_R}$ and $H^{\varepsilon}$ to get that
  \begin{equation}\label{eq:ito}
    \dE_x(H(X_{t\wedge T'_R}))-H(x)
    =\dE_x\PAR{\int_0^{t\wedge T'_R}\!LH(X_s)\,\dd s}
    \leq
    \dE_x\PAR{\int_0^{t\wedge T'_R}\! c\,\dd s} 
    \leq ct, 
  \end{equation}
  for each $t\geq 0$.  In particular
  \[
    \sup_{R >0} \dE_x(H(X_{t\wedge T'_R}))< \infty.
  \]
  On the other hand, since $H$ is everywhere nonnegative by Lemma
  \ref{le:coercivity}, we have
  \[
    R \, \IND_{T'_R \leq t}   \leq H(X_{t\wedge T'_R}),
  \]
  from which it follows that
  \[
    \dP_x (T'_R \leq t) \leq \frac{1}{R} \sup_{R >0} \dE_x  (H(X_{t\wedge T'_R})).
  \]
  Finally
  $\dP_x (T' \leq t) = \lim_{R \to \infty} \dP_x (T'_R \leq t) =0,$ for
  any $t\geq0$, and thus $\dP_x (T'=\infty) = 1.$
 \end{proof}

 Note that our proof of non-explosion notably differs from the one of
 \cite{MR1217451} and \cite{xiangdongli}: we deal with $\partial D$ at once,
 instead of handling separately $\infty$ and $x_i\neq x_j$, thanks to the
 geometric Lemma \ref{le:coercivity}.

\begin{remark}\label{XvarepstoX}\ %
  \begin{itemize}
  \item[a)] From the previous proof we see that the process $X$ and the process
    $X^{\varepsilon}$ as in \eqref{eq:SDE_vareps} coincide up to the stopping
    time $T^{\varepsilon}$. Moreover, $T^{\varepsilon}\to \infty$ a.s.\ as
    $\varepsilon\to 0$. This readily implies that $X^{\varepsilon}\to X$ a.s.
    uniformly on each finite time interval $[0,T]$ and, in particular that
    $\mathrm{Law}(X^{\varepsilon})\to \mathrm{Law}(X)$ in
    $C([0,T],(\dR^2)^N)$.
  \item[b)] Since $H$ is bounded from below and $LH$ is bounded from above,
    letting $R\to \infty$ in the first equality in \eqref{eq:ito} and using
    twice Fatou's Lemma we get that
    \[ 
    \dE_x(H(X_t))-H(x)
    \leq \dE_x\PAR{\int_0^{t}\!LH(X_s)\,\dd s}
    \]
    with  both sides finite, for all $t\geq 0$.
  \end{itemize} 
\end{remark}

  \section{Proof of Theorem \ref{th:2ndmoment}}\label{se:2ndmoment}

\begin{proof}[Proof of Theorem \ref{th:2ndmoment}]
  By the It\^o formula and~\eqref{eq:LHV}, $N^{-1}\ABS{X_t}^2 = {H_V}(X_t)$
  evolves according to the stochastic differential equation
  \begin{equation}\label{eq:ItoH_V}
    \begin{split}
      \dd{H_V}(X_t)
      &=L{H_V}(X_t)\dd t+\sqrt{2\frac{\al_N}{\be_N}}\nabla {H_V}(X_t)\,\dd B_t \\
      &=\PAR{4\frac{\al_N}{\be_N} %
        +2\al_N\frac{N-1}{N^2}-\frac{4\al_N}{N}{H_V}(X_t)}\dd t
      +\sqrt{2\frac{\al_N}{\be_N}}\frac{2}{N} X_t\,\dd B_t.\\
    \end{split}
  \end{equation}
  The process $ {H_V}(X_t)$ thus satisfies, until the first time it hits $0$,
  the stochastic differential equation \eqref{SDECIR} with the Brownian motion
  $b_t$ defined by $\dd b_t= \frac{X_t \cdot \, \dd B_t}{|X_t|} $. Standard
  properties of the CIR process (see \cite{MR785475}) and the fact that
  $4\frac{\al_N}{\be_N} +2\al_N\frac{N-1}{N^2} \geq 4 \frac{\al_N}{N \be_N}$,
  imply this stopping time is $\infty$ a.s. Pathwise uniqueness for
  \eqref{SDECIR} ensures that the law of $ {H_V}(X_t)$ is the same as for the
  CIR process (in particular, its invariant distribution is given in
  \cite{MR785475}).
  
  Ergodicity of the solution $R$ to \eqref{SDECIR} is proved in
  \cite{MR3167406} by a non quantitative approach. Let us prove the long time
  convergence bound~\eqref{eq:cvCIR} in Wasserstein$-1$ distance. By standard
  arguments, it is enough to show that for any pair $(R^x_t,R^y_t)$ of
  solutions to \eqref{SDECIR} driven by the same (fixed) Brownian motion
  $b_t$, and such that $(R^x_0,R^y_0)=(x, y)$, one has
  \[
  \dE[|R^x_t-R^y_t|]\leq\e^{-4\frac{\al_N}{N}  t}  |x-y|.
  \]
  This can be done adapting classical uniqueness argument for square root
  diffusions found in \cite{MR637061}. Indeed, consider the function
  \[
  x\in\dR_+\mapsto\rho(x):= \sqrt{\frac{8 \alpha_N}{ N \beta_N} x} 
  \]
  and the sequence $\{a_{\ell}\}_{\ell\geq 1}$ defined as 
  \[
  a_{0} = 1\quad\text{and}\quad 
  a_{\ell}= a_{\ell-1}\e^{-\ell{\frac{8 \alpha_N}{N \beta_N}}},\ \ell\geq
  1.
  \]
  Note that $a_{\ell}\searrow 0$ and $\int_{a_{\ell}}^{a_{\ell-1}} \rho(z)^{-2
  }\dd z = \ell$. For each $\ell \geq 1$, let moreover $z \mapsto
  \psi_{\ell}(z)$ be a non-negative continuous function supported on
  $(a_{\ell} , a_{\ell-1})$ such that $\int_{a_{\ell}}^{a_{\ell-1}}
  \psi_{\ell}(z)\dd z = 1$ and $0 \leq \psi_{\ell}(z)\leq
  2\ell^{-1}\rho(z)^{-2}$ for $a_{\ell} < z < a_{\ell-1}$. Consider also the
  even non-negative and twice continuously differentiable function $\phi_\ell$
  defined by
  \[
  \phi_{\ell}(x) 
  =\int_{0}^{|x|}d y\int_{0}^{y}\psi_{\ell}(z)\dd z, \quad x \in \dR
  \]
  For all $x\in \dR$ it satisfies : $\phi_{\ell}(x) \nearrow |x| , \,
  \phi'_{\ell}(x) \to \mbox{sign}(x) $ as $\ell \rightarrow \infty$, $ 0 \leq
  \phi'_{\ell}(x)x \leq |x| $ and $0\leq \phi''_{\ell}(x) \frac{8 \alpha_N}{ N
    \beta_N} |x| \leq 2\ell^{-1} $. Applying the It\^o formula to $\phi_\ell$
  and $\zeta_t:= R^x_t-R^y_t $ we get
  \begin{equation*}
    \begin{split}
      \phi_\ell(\zeta_{t})= 
      & \,   M^{\ell}_{t} 
      -  4\frac{\al_N}{N}\int_0^{t}\phi'_\ell(\zeta_s) \zeta_s\,\dd s  
      +  4\frac{\al_N}{N \be_N}\int_0^{t}\phi''_\ell(\zeta_s) \zeta_s\,\dd s\\
    \end{split}
  \end{equation*}
  for some martingale $M_{t}^{\ell}$. Taking expectation, letting $\ell \to
  \infty$ and applying Gronwall's lemma, the desired inequality is obtained.
  Assertion~\eqref{eq:mom2evol} follows from \eqref{eq:ItoH_V}, noting that
  the function $f(t) = \dE[H_V(X_t)\mid X_0=x]$ solves
  \[
  f(t) = f(0) 
  + \int_0^t \Big( 4\frac{\al_N}{\be_N} 
  +2\al_N\frac{N-1}{N^2} 
  - \frac{4\al_N}{N} f(s) \Big)\,\dd s
  \]
  for all $t>0$, and integrating this equation.
\end{proof}

\section{Proof of Theorem \ref{th:Lyapunov}}
\label{se:Lyapunov}

\begin{proposition}[Lack of convexity]\label{pr:noconvex}
  The set $D$ defined by \eqref{eq:D} is not convex. Moreover, the Hessian
  matrix of the function $H$ is not always positive definite on $D$.
\end{proposition}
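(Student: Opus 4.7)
The plan is to handle the two claims by explicit constructions.

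\medskip

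\noindent\textbf{Non-convexity of $D$.} I would exhibit two points in $D$ whose midpoint is excluded from $D$. Take any $x=(x_1,\ldots,x_N)\in D$ with $x_1\neq x_2$ and let $y:=(x_2,x_1,x_3,\ldots,x_N)$, which also lies in $D$. Then $\tfrac{1}{2}(x+y)$ has its first two coordinates both equal to $\tfrac{1}{2}(x_1+x_2)$, hence lies on the removed diagonal $\{z_1=z_2\}$ and does not belong to $D$.

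\medskip

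\noindent\textbf{Indefiniteness of $\na^2 H$.} I would first compute the two-body Hessian in closed form: differentiating $\na W(z)=-2z/\ABS{z}^2$ gives
\[
\na^2 W(z) = -\frac{2}{\ABS{z}^2}I_2 + \frac{4}{\ABS{z}^4}zz^T,
\]
whose eigenvalues are $+2/\ABS{z}^2$ along $z$ and $-2/\ABS{z}^2$ orthogonally to $z$, with trace zero in agreement with $\De W=0$ off the origin. The key observation is the negative eigenvalue of $\na^2 W$ in directions transverse to $z$. I would then target the case $N=2$: by the block structure \eqref{eq:hessE},
\[
\na^2 H(x) = \frac{2}{N}I_{2N} + \frac{1}{N^2}\begin{pmatrix} M & -M \\ -M & M \end{pmatrix}, \qquad M:=\na^2 W(x_1-x_2),
\]
and plugging in the antisymmetric test vector $v=(u,-u)\in(\dR^2)^2$ with $u$ a unit vector orthogonal to $x_1-x_2$, a short calculation yields
\[
v^T \na^2 H(x)\, v = \frac{4}{N} - \frac{8}{N^2\ABS{x_1-x_2}^2},
\]
which is strictly negative as soon as $\ABS{x_1-x_2}^2<2/N$, a configuration clearly achievable in $D$. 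For $N>2$ the same configuration still works after placing the remaining $N-2$ particles far from the chosen pair: the $O(\ABS{z}^{-2})$ decay of $\na^2 W$ makes the perturbation to $v^T\na^2 H\, v$ arbitrarily small.

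\medskip

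\noindent\textbf{Main obstacle.} There is no genuine difficulty beyond choosing the right test vector. A direction concentrated on a single particle would fail because the diagonal and off-diagonal blocks of the matrix $A$ then partly cancel; the antisymmetric choice $v=(u,-u)$ is what makes the contributions reinforce, and the transversality $u\perp(x_1-x_2)$ is what exposes the negative eigenvalue of $\na^2 W$. Once the correct direction is identified, both assertions reduce to elementary verifications.
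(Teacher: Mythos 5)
Both parts of your proof are correct, and the strategy is essentially the same as the paper's: exhibit a midpoint with coinciding coordinates, and for the Hessian exploit the negative eigenvalue of $\na^2 W$ transverse to $x_1-x_2$ when the pair is close. Your counterexample for convexity (a transposition of $x$) differs from the paper's (the segment $[-x,x]$ whose midpoint is the origin), but either works. For the Hessian you use the antisymmetric test vector $v=(u,-u,0,\dots,0)$ with $u\perp(x_1-x_2)$ and a quantitative threshold $\ABS{x_1-x_2}^2<2/N$, whereas the paper instead reads the non-definiteness off the principal block $\na^2_{x_1,x_1}H$ by letting $x_N\to x_1$, which handles all $N$ at once without needing to park the remaining particles far away. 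Both are fine.

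One factual slip in your closing remark: you claim ``a direction concentrated on a single particle would fail because the diagonal and off-diagonal blocks of $A$ then partly cancel.'' That is not so. With $v=(u,0,\dots,0)$ the off-diagonal blocks simply do not enter, and one gets $v^T\na^2H\,v=\tfrac{2}{N}-\tfrac{2}{N^2\ABS{x_1-x_2}^2}+\text{(bounded terms)}$, which is still negative once $\ABS{x_1-x_2}$ is small enough. This is in fact exactly what the paper does, phrased as non-positive-definiteness of the principal submatrix $\na^2_{x_1,x_1}H$. Your antisymmetric choice only buys a factor-of-two wider window on $\ABS{x_1-x_2}$; it is not needed for the argument to close.
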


\begin{proof}[Proof of Proposition \ref{pr:noconvex}]
  The set $D$ is not convex since $0\in[-x,x]\cap D^c$ for any $x\in D$.

  The convexity of $H$ could be studied using a bloc version of the
  Ghershgorin theorem, see \cite{MR0151473}, if $W$ were convex. Unfortunately
  it turns out that $W$ is nowhere convex. More precisely, setting
  $z=(a,b)^\top\in\dR^2\setminus\{(0,0)\}$, we get
  \[
  W(z)
  =-\log(\ABS{z}^2)
  =-\log(a^2+b^2)
  \]
  and
  \[
  \na W(z)
  =-2\frac{z}{\ABS{z}^2}
  =-2\frac{(a,b)^\top}{a^2+b^2}
  \quad
  \text{and}
  \quad
  \na^2W(z)
  =
  2\frac{
    \begin{pmatrix}
      a^2-b^2 & 2ab \\
      2ab & b^2-a^2
    \end{pmatrix}
  }{(a^2+b^2)^2}.
  \]
  Thus
  \[
  \mathrm{Tr}(\na^2 W(z))=0
  \quad\text{and}\quad
  \det(\na^2 W(z))
  =-\frac{4}{\ABS{z}^4}.
  \]
  Consequently the two eigenvalues $\la_\pm(z)$ of $\na^2W(z)$ satisfy
  \[
  \la_-(z)=-\la_+(z)=-\frac{2}{a^2+b^2}=-\frac{2}{\ABS{z}^2}
  \underset{z\to0}{\longrightarrow}-\infty,
  \]
  and have respective eigenvectors $(-b, a)$ and $(a,b)$. In particular $W$ is
  not~convex.

\medskip

  Now, by \eqref{eq:hessE}, if we fix $x_1,\ldots,x_{N-1}$ and let
  $x_N$ tend to $x_1$, then $\nabla^2W(x_1-x_j)$ will remain bounded for any
  $j\in\{2,\ldots,N-1\}$ while the smallest eigenvalue of $\nabla^2W(x_1-x_N)$
  blows down to $-\infty$. Therefore $\nabla^2_{x_1,x_1}H(x_1,\ldots,x_N)$ and
  thus $\nabla^2H(x_1,\ldots,x_N)$ is not positive definite for such points.

  Note however that we may also use \eqref{eq:hessE} to get that
  $\nabla^2H(x_1,\ldots,x_N)$ is positive definite at points of $D$ for which
  all the differences $x_i-x_j$ are large enough.
\end{proof}

The following Lemma is the gradient version of Lemma \ref{le:coercivity}.

\begin{lemma}[Gradient coercivity]\label{le:coercivity-nabla}
For any $N$ and $x = (x_1, \dots, x_N)$ in $D$ we have
  \[
  \vert \nabla H (x) \vert^2 \geq 
  \frac{4}{N^2} \vert x \vert^2 %
  + \frac{4}{N^4} \sum_{i\neq j}\frac{1}{|x_i-x_j|^2 }- 4 \frac{N-1}{N^2} \cdot
  \] 
  In particular 
  \[
  \lim_{x\to\partial D}\ABS{\nabla H(x)}=+\infty.
  \]
\end{lemma}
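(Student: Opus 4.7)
The plan is to deduce the inequality directly from the exact expression~\eqref{eq:naE2} for $|\na H(x)|^2$. Comparing~\eqref{eq:naE2} with the right-hand side of the claimed bound, the missing ingredient is the inequality
\[
\sum_{i=1}^N \Bigl|\sum_{j\neq i}\frac{x_i - x_j}{|x_i - x_j|^2}\Bigr|^2 \geq \sum_{i\neq j}\frac{1}{|x_i-x_j|^2}.
\]
First I would expand the square and split the double sum into its ``diagonal'' $j=k$ and ``off-diagonal'' $j\neq k$ parts. The diagonal part is precisely $\sum_{i\neq j}|x_i-x_j|^{-2}$, so it suffices to check that the off-diagonal contribution
\[
S := \sum_i\sum_{\substack{j,k\neq i\\ j\neq k}} \frac{(x_i-x_j)\cdot(x_i-x_k)}{|x_i-x_j|^2\,|x_i-x_k|^2}
\]
is non-negative.

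To control $S$, I would regroup its terms by unordered triples $\{p,q,r\}$ of pairwise distinct indices, letting each of $p,q,r$ play in turn the role of the apex $i$ (each triple yielding six ordered terms, two per apex). Writing $a=x_p-x_q$ and $b=x_p-x_r$, so that $x_r-x_q=b-a$, the contribution of a single triple becomes a symmetric function of the three points which, after placing the three fractions over the common denominator $|a|^2|b|^2|a-b|^2$ and using $|a-b|^2=|a|^2+|b|^2-2\,a\cdot b$, collapses to a numerator proportional to $|a|^2|b|^2-(a\cdot b)^2$. The Cauchy--Schwarz inequality then gives per-triple non-negativity, whence $S\geq 0$. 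The main obstacle is precisely this algebraic collapse: it is not obvious a priori that the six terms associated with a given triple can be controlled without additional geometric information, and one must carry out the simplification carefully to reveal the Cauchy--Schwarz structure.

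For the divergence statement, recall that $\partial D=\{\infty\}\cup\bigcup_{i\neq j}\{x_i=x_j\}$. Along any sequence $x^{(n)}\to\partial D$, either $|x^{(n)}|\to\infty$, in which case the term $\tfrac{4}{N^2}|x|^2$ of the lower bound blows up, or $|x^{(n)}_i-x^{(n)}_j|\to 0$ for some $i\neq j$, in which case the corresponding summand of $\tfrac{4}{N^4}\sum_{i\neq j}|x_i-x_j|^{-2}$ blows up while the other summands remain non-negative. Since the constant $-4(N-1)/N^2$ is bounded, the lower bound tends to $+\infty$ in both cases, and so does $|\na H(x)|$.
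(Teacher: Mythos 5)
Your proof is correct and follows essentially the same route as the paper: starting from~\eqref{eq:naE2}, you reduce to the non-negativity of the off-diagonal sum $S$, regroup by unordered triples, use the identity $|a-b|^2 = |a|^2 + |b|^2 - 2\,a\cdot b$ to collapse the per-triple numerator to $2\bigl(|a|^2|b|^2 - (a\cdot b)^2\bigr)$, and conclude by Cauchy--Schwarz. This is exactly the computation in the paper's proof of~\eqref{eq:bolley-lemma}, just written with $S$ denoting the cross term rather than the full difference $S_N$.
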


\begin{proof}[Proof of Lemma \ref{le:coercivity-nabla}]
  This is a consequence of~\eqref{eq:naE2} and the fact that for any $N$ and
  any distinct $x_1,\ldots,x_N\in\dR^2$,
  \begin{equation}\label{eq:bolley-lemma}
    S_N:=\sum_{i=1}^{N}
    \Big\vert\sum_{j\neq i}\frac{x_i-x_j}{|x_i-x_j|^2}\Big\vert^2%
    -\sum_{i=1}^{N}\sum_{j\neq i}\frac{1}{|x_i-x_j|^2 }\geq0.
  \end{equation}

  For the proof of~\eqref{eq:bolley-lemma}, we first observe that
  \[
  S_2=0
  \]
  and we now consider $N\geq 3$ for which
  \[
  S_N=2\sum_{i=1}^{N}
  \sum_{\substack{1\leq j<k\leq N\\j,k\not=i}}
  \frac{(x_i-x_j) \cdot (x_i-x_k)}{\vert x_i-x_j\vert^2\vert x_i-x_k\vert^2}.
  \]
 
  Decomposing
  \[
  \sum_{i=1}^{N}
  \sum_{\substack{1\leq j<k\leq N\\j,k\not=i}}\cdot=\sum_{1\leq i<j<k\leq N}\cdot
  +\sum_{1\leq j<i<k \leq N}\cdot+\sum_{1 \leq j<k<i \leq N}\cdot
  \]
  and letting $I=j,J=i$ and $K=k$ in the second sum on the right-hand side and
  $I=j,J=k$ and $K=i$ in the third sum, we see that $S_N/2$ is equal to
  \begin{multline*}
    \sum_{1\leq i<j<k\leq N}
    \frac{\vert x_j-x_k\vert^2 (x_i-x_j)\cdot (x_i-x_k)}%
    {\vert x_i-x_j\vert^2\vert x_j-x_k\vert^2\vert x_k-x_i\vert^2}\\
    +\frac{\vert x_k-x_i\vert^2 (x_j-x_i)\cdot(x_j-x_k)}%
    {\vert x_i-x_j\vert^2\vert x_j-x_k\vert^2\vert x_k-x_i\vert^2}%
    +\frac{\vert x_i-x_j\vert^2 (x_k-x_i)\cdot(x_k-x_j)}%
    {\vert x_i-x_j\vert^2\vert x_j-x_k\vert^2\vert x_k-x_i\vert^2}\cdot
  \end{multline*}
  But
  \[
  \vert x_j-x_k\vert^2 %
  = \vert x_i-x_j\vert^2+\vert x_i-x_k\vert^2-2 \, ( x_i-x_j) \cdot (x_i-x_k)
  \]
  so
  \[
  S_N=4\sum_{1\leq i<j<k \leq N}
  \frac{\vert x_i-x_j\vert^2 \vert x_i-x_k\vert^2-(x_i-x_j)\cdot (x_i-x_k)^2}%
  {\vert x_i-x_j\vert^2\vert x_j-x_k\vert^2\vert x_k-x_i\vert^2} \cdot
  \]
  Hence $S_N\geq 0$ by the Schwarz inequality. This shows also that equality
  is achieved when $x_i-x_j$ and $x_i-x_k$ are parallel for any $i,j,k$ for
  instance when $x_i=(i,0)$ for any $i$, thanks to the equality case in the
  Schwarz inequality. Let us observe from the proof that the same bound would
  hold in any Hilbert space.
\end{proof}

The following lemma is the counterpart on $H_W$ of Theorem \ref{th:2ndmoment}
for $H_V$. It is likely that the bounds in the lemma are not optimal, as we
would expect bounds independent of $N$. This is probably due to our use of the
bound~\eqref{eq:bolley-lemma}. The lemma is not used but has its own interest
as we see that the particular speed $\alpha_N =N$ naturally appears in the
upper bounds, as in Theorem \ref{th:2ndmoment}.
 
\begin{lemma}[Energy evolution]\label{le:enevo}
  For every $x\in D$ and $t\geq0$, let us define
  \[
  \eta_x(t):=\frac{2N}{N-1}\dE_x [{H_W}(X_t)  ]
  \quad\text{where}\quad
  {H_W}(x):=\frac{1}{2N^2}\sum_{i\neq j}W(x_i-x_j).
  \]
  Then, for every $x\in D$ and $t\geq0$,
  \[
  \eta_x(t) %
  \leq-\log\Bigr(\e^{-\eta_x(0)-4 \alpha_Nt/N}+\frac{2}{N}(1-\e^{-4 \alpha_Nt/N})\Bigr)
  \]
  and in particular
  \[
  \eta_x(t)\leq\log\frac{N}{2(1-\e^{-4 \alpha_Nt/N})}
  \quad\text{and}\quad
\eta_x(t)\leq\max\bigg(\eta_x(0),\log\frac{N}{2}\bigg).
  \]
\end{lemma}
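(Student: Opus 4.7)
The plan is to combine the expression \eqref{eq:LHW} for $LH_W$ with the geometric estimate \eqref{eq:bolley-lemma} of Lemma~\ref{le:coercivity-nabla} and a Jensen step to obtain a closed scalar differential inequality for $t\mapsto\eta_x(t)$ that can be solved explicitly. Observe first that $\eta(x):=\frac{2N}{N-1}H_W(x)=-\frac{2}{N(N-1)}\sum_{i\neq j}\log\ABS{x_i-x_j}$, so $\eta_x(t)=\dE_x[\eta(X_t)]$. Inserting \eqref{eq:bolley-lemma} into \eqref{eq:LHW} yields the pointwise bound
\[
LH_W(x)\leq \frac{2\al_N(N-1)}{N^2}-\frac{4\al_N}{N^4}\sum_{i\neq j}\frac{1}{\ABS{x_i-x_j}^2}.
\]
Applying Jensen's inequality to the convex function $u\mapsto \e^u$ and the $N(N-1)$ terms $u_{ij}=-\log\ABS{x_i-x_j}^2$ gives $\frac{1}{N(N-1)}\sum_{i\neq j}\ABS{x_i-x_j}^{-2}\geq \e^{\eta(x)}$, so that, after multiplication by $\frac{2N}{N-1}$,
\[
L\eta(x)\leq \frac{4\al_N}{N}-\frac{8\al_N}{N^2}\e^{\eta(x)}.
\]

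I would then apply Dynkin's formula to $\eta(X_t)$, a second Jensen step $\dE_x[\e^{\eta(X_s)}]\geq \e^{\eta_x(s)}$, and integrate to obtain
\[
\eta_x(t)\leq \eta_x(0)+\int_0^t\!\Bigr(\frac{4\al_N}{N}-\frac{8\al_N}{N^2}\e^{\eta_x(s)}\Bigr)\dd s.
\]
A standard ODE comparison argument, using the substitution $v(t):=\e^{-\eta_x(t)}$ that turns this inequality into the linear lower bound $v'(t)\geq -\frac{4\al_N}{N}v(t)+\frac{8\al_N}{N^2}$ and hence, by Gr\"onwall, into $v(t)\geq \e^{-\eta_x(0)-4\al_N t/N}+\frac{2}{N}(1-\e^{-4\al_N t/N})$, then yields the main claim after applying $-\log$.

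The two particular consequences are immediate. Dropping the nonnegative first term in the argument of $-\log$ gives $\eta_x(t)\leq \log\frac{N}{2(1-\e^{-4\al_N t/N})}$. For the second, the right-hand side of the Gr\"onwall estimate is a convex combination of $\e^{-\eta_x(0)}$ and $2/N$, hence bounded below by $\min(\e^{-\eta_x(0)},2/N)$, which gives $\eta_x(t)\leq\max(\eta_x(0),\log(N/2))$.

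The main obstacle will be the rigorous use of Dynkin's formula: $\eta$ is smooth on $D$ but blows up on $\pd D$, so direct application of It\^o is not allowed. I would instead work with the regularized dynamics $X^\varepsilon$ of \eqref{eq:SDE_vareps}, apply It\^o to the corresponding smooth surrogate $\eta^\varepsilon$ built from $W^\varepsilon$ up to the stopping time $T'_R=\inf\{t\geq 0:H(X_t)>R\}$ of the proof of Theorem~\ref{th:boum}, and then pass to the limit $\varepsilon\to 0$ and $R\to\infty$. Since $L\eta^\varepsilon$ is bounded above by the constant $4\al_N/N$ along the trajectory, a one-sided Fatou argument in the spirit of Remark~\ref{XvarepstoX}\,(b), combined with $T^\varepsilon\nearrow\infty$ and $T'_R\nearrow\infty$ a.s.\ from Theorem~\ref{th:boum}, is enough to transfer the above integral inequality to $\eta_x(t)$ itself.
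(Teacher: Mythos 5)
Your Lyapunov computation and the ODE comparison are exactly the paper's: the bound $L\eta\leq \frac{4\al_N}{N}-\frac{8\al_N}{N^2}\e^{\eta}$ from \eqref{eq:LHW}, \eqref{eq:bolley-lemma} and Jensen, the second Jensen step $\dE_x[\e^{\eta(X_s)}]\geq\e^{\eta_x(s)}$, the substitution $v=\e^{-\eta_x}$ and Gr\"onwall, and the two corollaries by dropping a term or interpolating -- all of this is correct and matches the paper.

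The gap is in the last paragraph, in the rigorous passage from the localized It\^o identity to $\dE_x(\eta(X_t))-\eta(x)\leq\dE_x\bigl(\int_0^tL\eta(X_s)\,\dd s\bigr)$. The one-sided Fatou argument of Remark~\ref{XvarepstoX}\,(b) uses \emph{two} properties of $H$: that $LH$ is bounded from above \emph{and} that $H$ is bounded from below (Lemma~\ref{le:coercivity}). For the left-hand side, Fatou in the needed direction, $\dE_x(\eta(X_t))\leq\liminf_R\dE_x(\eta(X_{t\wedge T'_R}))$, requires $\eta(X_{t\wedge T'_R})+M\geq0$ for some constant $M$, hence boundedness of $\eta$ from below. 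But $\eta(x)=-\tfrac{2}{N(N-1)}\sum_{i\neq j}\log|x_i-x_j|$ tends to $-\infty$ when the particles spread out, so the argument does not transfer verbatim; you would need to additionally establish that $\sup_R\eta^-(X_{t\wedge T'_R})$ is integrable (which can be done by bounding $\eta^-\lesssim\log(1+|x|)$ and then controlling $|X|$ through $H_V$, but is not automatic). The paper avoids this entirely with a cleaner trick: it keeps the exact Dynkin \emph{identity} for $H_V$ from~\eqref{eq:ItoH_V} (valid since $H_V(X_t)$ is the CIR process of Theorem~\ref{th:2ndmoment}, so all moments are under control) and subtracts it from the one-sided inequality for $H=H_V+H_W$ in Remark~\ref{XvarepstoX}\,(b), yielding the desired inequality for $H_W$ without ever needing $H_W$ bounded below. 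You should either adopt this subtraction argument or supply the uniform-integrability estimate for $\eta^-(X_{t\wedge T'_R})$ explicitly.

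A small additional point worth spelling out when "integrating" the resulting integral inequality: since $g(u)=\frac{4\al_N}{N}-\frac{8\al_N}{N^2}\e^u$ is \emph{decreasing}, the single inequality $\eta_x(t)\leq\eta_x(0)+\int_0^tg(\eta_x(s))\,\dd s$ is not by itself sufficient for the comparison; one uses the Markov property to get the running form $\eta_x(t)-\eta_x(s)\leq\int_s^tg(\eta_x(r))\,\dd r$ for all $0\leq s\leq t$, from which the a.e.\ differential inequality and then Gr\"onwall for $v=\e^{-\eta_x}$ follow.
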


\begin{proof}[Proof of Lemma \ref{le:enevo}]
  Taking expectation to the first line in equation \eqref{eq:ItoH_V} and
  subtracting the obtained identity from the inequality in Remark
  \ref{XvarepstoX} b), we get
  \[ 
  \dE_x(H_W(X_t))-H_W(x)
  \leq \dE_x\PAR{\int_0^{t}\!LH_W(X_s)\,\dd s}
  \]
  for all $t\geq 0$. But from \eqref{eq:LHW} and \eqref{eq:bolley-lemma} we
  get
  \begin{align*}
  L{H_W}(x)
  &=2\frac{\al_N}{N^2}(N-1)
  -4\frac{\al_N}{N^4}
  \sum_{i=1}^N\bigg|\sum_{j\neq i}\frac{x_i-x_j}{\ABS{x_i-x_j}^2}\bigg|^2\\
  &\leq 2\frac{\al_N}{N^2}(N-1)
  -4\frac{\al_N}{N^4}\sum_{i\neq j}\frac{1}{\ABS{x_i-x_j}^2}\\
  &= 4\frac{\al_N}{N^2}\frac{N-1}{N}
  \SBRA{\frac{N}{2}-\frac{1}{N(N-1)}\sum_{i\neq j}\frac{1}{\ABS{x_i-x_j}^2}}.
  \end{align*}
  On the other hand, by the Jensen inequality,
  \[
  {H_W}(x)
  =\frac{N-1}{2N}\frac{1}{N(N-1)}\sum_{i\neq j}\log\frac{1}{\ABS{x_i-x_j}^2}
  \leq\frac{N-1}{2N}
  \log\bigg(\frac{1}{N(N-1)}\sum_{i\neq j}\frac{1}{\ABS{x_i-x_j}^2}\bigg).
  \]
  Therefore, we get
  \[
  L{H_W}(x)
  \leq 4\frac{\al_N}{N^2}\frac{N-1}{N}
  \SBRA{\frac{N}{2}-\e^{\frac{2N}{N-1}{H_W}(x)}}.
  \]

  Using  again the Jensen inequality, it follows that
  \begin{align*}
    \eta_x(t)
    &\leq 
    \eta_x(0)
    + \frac{2N}{N-1} \int_0^t \dE_x L H_W (X_s) \,\dd s
    \\
    &\leq 
    \eta_x(0) %
    + \frac{8 \alpha_N}{N^2}  %
    \int_0^t %
    \SBRA{\frac{N}{2}-\dE_x \big[ \e^{\frac{2N}{N-1}{H_W}(X_s)} \big]}\dd s\\
    &\leq
    \eta_x(0) %
    +\frac{8\alpha_N}{N^2}\int_0^t\SBRA{\frac{N}{2}-e^{\eta_x(s)}}\dd s.
  \end{align*}
  Therefore
  \[ 
  \e^{-\eta_x(t)} \geq\e^{-\eta_x(0)-4 \alpha_Nt /N} +\frac{2}{N}(1-\e^{-4
    \alpha_Nt/N}) \geq \min \{ \frac{2}{N}, \e^{-\eta_x(0)} \}
  \]
  by time integration for the first bound and then, for the second bound, by
  writing the obtained expression as the interpolation between $2/N$ and
  $\e^{-\eta_x(0)}$. Dropping the $\e^{-\eta_x(0)-4 \alpha_Nt/N}$ term gives
  the second upper bound in the lemma.
\end{proof}

\begin{proof}[Proof of Theorem \ref{th:Lyapunov}] \ %
  In order to prove that $P^N$ satisfies a Poincaré inequality, we follow the
  approach developed in \cite{MR2386063} based on a Lyapunov function together
  with a local Poincaré inequality (see also the proof of
  \cite[Th.~1.1]{cattiaux-guillin}). This approach amounts to find a positive
  $\cC^2$ function $\phi$ on $D$, a compact set $K\subset D$ and positive
  constants $c,c'$, such that on $D$
  \[
  L \phi\leq -c\phi+c'\IND_K.
  \]
  Such a $\phi$ is called a Lyapunov function. Indeed, for a centered
  $f\in\cF$ this gives
  \[
  \int\!f^2\,\dd P^N
  \leq\int_K\!\frac{c'}{c\phi}f^2\,\dd P^N
  +\int\!-\frac{L\phi}{c\phi}f^2\dd P^N.
  \]
  The first term of the right-hand side can be controlled using a local
  Poincaré inequality, in other words a Poincaré inequality on every ball
  included in $D$, by comparison to the uniform measure. The second one can be
  handled using an integration by parts which is allowed since $f\in\cF$. See
  \cite{MR2386063} and \cite{cattiaux-guillin} for the details.

  For our model $P^N$ we take the $\cC^\infty$ function  
  \[
  \phi=\e^{\ga H}
  \]
  for some $\ga>0$. This function is larger than or equal to $1$ by Lemma
  \ref{le:coercivity}, and the probability measure $P^N$ has a smooth positive
  density on $D$, which provides a local Poincaré constant that may depend on
  $N$ however. 

  Let us check that $\phi$ is a Lyapunov function. To this end, let us show
  that there exist constants $c, c''>0$ and a compact set $K\subset D$ such
  that, on $D$,
  \[
  \frac{L \phi}{\phi}\leq -c+c''\IND_{K}.
  \]
  Indeed, since $\phi$ is positive and bounded on the compact set $K$, this
  gives, on $D$,
  \[
  L \phi\leq -c\phi+c''\sup_{x\in K}\ABS{\phi(x)}\IND_K = -c\phi+c'\IND_K.
  \]
  In order to compute $L \phi/\phi$, we observe that
  \[
  \na \phi=\ga \phi\na H 
  \quad \textrm{and} \quad
  \De \phi =\ga^2 \phi\ABS{\na H}^2+\ga \phi\De H.
  \]
  Therefore, by~\eqref{eq:gen}, 
  \[
  \frac{\beta_N}{\al_N\ga} \frac{L\phi}{\phi}
  =\frac{\De \phi-\be_N\na H\cdot\na \phi}{\ga \phi}
  =\De H+(\ga-\be_N)\ABS{\na H}^2.
  \]
  Now $\Delta H = 4$ on $D$ by~\eqref{eq:deE}. Moreover by
  Lemma~\ref{le:coercivity-nabla}, for $\ga<\be_N$ there exists a compact set
  $K\subset D$ such that
  \[
  (\beta_N - \gamma) \inf_{(x_1,\ldots,x_N)\in K^c}\ABS{\na H}^2(x_1,\ldots,x_N)>5.
  \]
  One can take for instance
  \[
  K=\{x\in(\dR^2)^N:|x|\leq R\text{ and }\min_{i\neq j}\ABS{x_i-x_j}\geq\veps\}
  \]
  for $R>0$ large enough and $\veps>0$ small enough.

  Then $\beta_N /(\alpha_N \gamma) L\phi / \phi \leq -1$ on $D \setminus K$ and the
  Poincaré inequality is proved. Note that we can take $\ga=1$ if $\be_N\geq
  N$.
\end{proof}

\begin{remark}[Poincaré inequality for $P^2$]
  Let us give an alternative direct proof of the Poincaré inequality for the
  probability measure $P^2$. Consider indeed the change of variable
  $(u,v)=((x_1+x_2)/2,(x_1-x_2)/2)$ on $\dR^2 \times \dR^2$, which
  has the advantage to decouple the variables (this miracle is available only
  in the two particle case $N=2$). Letting $\beta = \beta_2$, we get a
  probability density function on $\dR^2\times\dR^2$ proportional to
  \[
  (u,v) \in\dR^2 \times \dR^2\mapsto%
  \e^{-\beta\ABS{u}^2-\beta\ABS{v}^2-+\beta/2 \log\ABS{v}}%
  =\e^{-\beta\ABS{u}^2}  \ABS{v}^{\beta/2} \e^{-\beta \ABS{v}^2}.
  \]
  This probability measure is the tensor product of the Gaussian measure,
  which satisfies a Poincaré inequality, and of the measure $\mu$ with density
  \[
  \frac{\e^{-\beta \Psi(v)}}{Z}
  \quad \text{with} \quad 
  \Psi(v) =  \vert v \vert^2 - \frac{1}{2} \log \vert v \vert.
  \]
  The measure $\mu$ is not log-concave at all (singularity at zero notably)
  but $\Psi(v)$ is a convex function of the norm $r = \vert v \vert$. Hence
  \cite[Th.~1]{MR2083386} ensures that $\mu$ satisfies a Poincaré inequality,
  and then so does our product measure by tensorization.
  
  Note that one can prove Poincaré for $\mu$ by using a Lyapunov function as
  in the proof of Theorem \ref{th:Lyapunov}, instead of
  \cite[Th.~1]{MR2083386}: namely if $L' :=\Delta- \beta \na \Psi \cdot\na$ in
  dimension two and $\phi=\e^{\beta \Psi}/2$, then
  \[
  \frac{L' \phi}{\phi} %
  = \frac{\beta}{2} \De \Psi - \frac{\beta^2}{4} \ABS{\na \Psi}^2,
  \quad
  \na \Psi (v) =2  v-\frac{v}{2 \ABS{v}^2},
  \quad
  \De \Psi=4 
  \]
  for $v\neq0$ (recall that $\log\ABS{v}$ is harmonic in dimension two).
  Therefore
  \[
  \frac{L' \phi}{\phi} %
  = 2 \beta^2 - \beta^2 \Big( \ABS{x}-\frac{1}{4\ABS{x}} \Big)^2 %
  \leq -c + c'\IND_K
  \]
  for the compact set
  \[
  K:=\{x\in\dR^2:r\leq\ABS{x}\leq R\}
  \]
  with $0<r<R$ well chosen.
\end{remark}

\section{Proof of Theorem \ref{th:Bobkov}}
\label{se:Bobkov}

Recall that if $\mu^N$ is the random empirical measure under $P^N$ then for
any continuous and bounded test function $f:\dR^2\to\dR$, using
exchangeability and \eqref{eq:pnphinn},
\begin{align*}
  \dE_{P^N}\int_{\dR^2} \!f(x)\,\mu^N(\dd x)
  &=\int_{(\dR^2)^N}\!\PAR{\frac{1}{N}\sum_{k=1}^Nf(x_k)}\,
  \vphi^{N,N}(x_1,\ldots,x_N)\dd x_1\cdots\dd x_N\\
  &=\int_{\dR^2}\!f(x)\,\vphi^{1,N}(x)\,\dd x
  =\dE_{P^{1,N}}(f)
\end{align*}
where $P^{1,N}$ is the $1$-dimensional marginal of $P^N$. By Theorem
\ref{th:chaos}, as $N\to\infty$, the density $\vphi^{1,N}$ of $P^{1,N}$ tends
to the density of the uniform distribution $\mu_\infty$ on the unit disc of
$\dR^2$. The probability measure $\mu_\infty$ satisfies a Poincaré inequality
for the Euclidean gradient, since for instance it is a Lipschitz contraction
of the standard Gaussian on $\dR^2$. Unfortunately, the convergence of
densities above is not enough to deduce that $P^{1,N}$ satisfies a Poincaré
inequality (uniformly in $N$ or not).

\begin{proof}[Proof of Theorem \ref{th:Bobkov}]
  The idea is to view $P^{1,N}$ as a Boltzmann\,--\,Gibbs measure and to use
  some hidden convexity. Namely, from \eqref{eq:phikn} its density is given on
  $\dR^2$ by
     \begin{equation}\label{phi1N}
  \vphi^{1,N}(x)=\frac{\e^{-N\ABS{x}^2-\psi(\sqrt{N}x)}}{\pi}
  \quad\text{with}\quad
  \psi(x):=-\log\sum_{\ell=0}^{N-1}\frac{\ABS{x}^{2\ell}}{\ell!}.
  \end{equation}
  If we now write $f(x)=\ABS{x}^2+\psi(x)=g(r^2)$ with $r=\ABS{x}$ and
  $g(t)=t-\log\sum_{\ell=0}^{N-1}t^\ell/\ell!$ then
  \[
  \na f(x)=2g'(r^2)x
  \quad\text{and}\quad
  \na^2 f(x)=4g''(r^2)x\otimes x+2g'(r^2)I_2
  \]
  and
  \[
  g'(t)
  =\frac{\frac{t^{N-1}}{(N-1)!}}{\sum_{\ell=0}^{N-1}\frac{t^\ell}{\ell!}}
  \geq0
  \quad\text{and}\quad
  g''(t)
  =\frac{t^{N-2}}{(N-2)!}\frac{
    \PAR{\sum_{\ell=0}^{N-1}\frac{t^\ell}{\ell!}
      -\frac{t}{N-1}\sum_{\ell=0}^{N-2}%
      \frac{t^\ell}{\ell!}}}{\PAR{\sum_{\ell=0}^{N-1}\frac{t^\ell}{\ell!}}^2}
  \geq0.
  \]
  It follows that $f$ is convex (note that its Hessian vanishes at the
  origin), and in other words $P^{1,N}$ is log-concave. Therefore, according
  to a criterion stated in \cite[Th.~1.2]{MR1742893} and essentially due to
  Kannan, Lov\'asz and Simonovits, it suffices to show that the second moment
  of $P^{1,N}$ is uniformly bounded in $N$.
  
  But, using the density~\eqref{phi1N} of $P^{1,N}$, this moment is
  \begin{equation}\label{eq:mom2}
    \int_{\dR^2}\!\!\ABS{x}^2 P^{1,N} (\dd x)
    =\sum_{\ell=0}^{N-1}\frac{N^\ell}{\ell!}%
    \int_0^{\infty}\!\!r^{2(\ell+1)}2r\e^{-Nr^2}\,dr
    =\sum_{\ell=0}^{N-1}\frac{N^\ell}{\ell!}\frac{(\ell+1)!}{N^{\ell+2}}
    =\frac{N+1}{2N} \leq \frac{1}{2} \cdot
  \end{equation}
  
  This concludes the argument thanks to the Bobkov criterion.
\end{proof}
   
  With $\be_N=N^2$ and since $P^{1,N} = \dE \mu^N$, \eqref{eq:mom2} is
  consistent with \eqref{eq:mom2evol} since in this case
  \[
  \lim_{t\to\infty}\dE[{H_V}(X_t)\,|\,X_0=x]%
  =\frac{1}{2}+\frac{1}{N}-\frac{1}{2N}
  =\frac{N+1}{2N}\cdot
  \]
  Note also that, by~\eqref{eq:mom2}, the second moment of $P^{1,N} =
  \dE\mu^N$ tends to $1/2$ as $N\to\infty$; this turns out to be the second
  moment of its weak limit $\mu_\infty$ since
  \[
  \int_{\dR^2} \!\ABS{x}^2\,\mu_\infty(\dd x)
  =\frac{2\pi}{\pi}\int_0^1\!r^3\dd r
  =\frac{1}{2} \cdot
  \]
  Observe finally that a bound on the second moment of $P^{1,N} = \dE\mu^N$
  can be obtained as follows. Let $M$ be a $N\times N$ random matrix with
  i.i.d.\ entries of Gaussian law $\cN(0,\frac{1}{2N}I_2)$ (in other words an
  element of the Complex Ginibre Ensemble); then, by Weyl's inequality
  \cite[Th.~3.3.13]{MR1288752} on the eigenvalues,
  \[
  \int_{\dR^2}\!\ABS{x}^2\,\dE\mu^N(\dd x)
  =\frac{1}{N}\dE\sum_{k=1}^N\ABS{\la_k(M)}^2
  \leq\frac{1}{N}\dE\sum_{k=1}^N\la_k(MM^*)
  =\frac{1}{N}\dE\mathrm{Tr}(MM^*)
  =1.
  \]

\begin{remark}[Poincaré via spherical symmetry]
  The probability measure $P^{1,N}$ is also spherically symmetric, or
  rotationally invariant, as in Bobkov \cite{MR2083386} (see also
  \cite{MR3464047}). Namely, in the notation $f(x)= g( r^2)$ with $r = \vert x
  \vert$ for the ``potential'' of the density of $P^{1,N}$, as in the proof of
  Theorem \ref{th:Bobkov}, let $h( r)=g( r^2).$ Then
  \[
  f(x) = h( r), \quad \na f(x)=h'(r)\frac{x}{\ABS{x}}
  \quad\text{and}\quad
  \na^2f(x)%
  =h''(r)\frac{x\otimes x}{\ABS{x}^2}%
  +h'(r)\frac{\begin{pmatrix}
      x_2^2&-x_1x_2
      \\-x_1x_2&x_1^2
    \end{pmatrix}}{\ABS{x}^{3}} \cdot
  \]
  The matrix on the right-hand side has non-negative trace and null
  determinant, and is thus positive semi-definite (it is the Hessian of the
  norm $x\mapsto\ABS{x}=r$). Moreover
  \[
  h'(t)=2g'(t^2)t,
  \quad
  h''(t)=4g''(t)t^2+2g'(t^2)\geq0.
  \]
  It follows that $P^{1,N}$ is a spherically symmetric probability measure on
  $\dR^2$, and its density is a log-concave function of the norm (and it
  vanishes at the origin). Now according to \cite[Th.~1]{MR2083386}, it
  follows that the probability measure $P^{1,N}$ satisfies a Poincaré
  inequality with a constant which depends only on the second moment, which
  again is bounded in $N$.
\end{remark}
  
\begin{remark}[Logarithmic Sobolev inequality]\label{rk:P1NLSI}
  According to Bobkov's result \cite[Th.~1.3]{MR1742893}, we even get for
  $P^{1,N}$ a logarithmic Sobolev inequality with a uniform constant in $N$
  provided that $P^{1,N}$ has a sub-Gaussian tail uniformly in $N$ (which is
  stronger than the second moment control). This is indeed the case. Namely,
  if $Z\sim P^{1,N},$ then for any real $R\geq0$,
  \[
  \dP(Z\geq R)
  =\int_0^{2\pi}\int_R^{+\infty}\frac{\e^{-Nr^2}}{\pi}
  \sum_{\ell=0}^{N-1}\frac{(Nr^2)^\ell}{\ell!}\,r\dd\te \dd r
  =\frac{1}{N}%
  \int_{NR^2}^{+\infty} \e^{-s}\sum_{\ell=0}^{N-1}\frac{s^\ell}{\ell!}\,\dd s.
  \]
  Moreover
  \[
  \frac{1}{N}\sum_{\ell=0}^{N-1}\frac{s^\ell}{\ell!}
  \leq\frac{s^N}{N!}\leq 2^N\e^{\frac{1}{2}s}
  \]
  for $s\geq N$. Hence, for $R\geq2$,
  \[
  \dP(Z\geq R)\leq\int_{NR^2}^{+\infty} 2^N\e^{-\frac{1}{2}s}\,\dd s
  =2^{N+1}\e^{-\frac{1}{2}NR^2}\leq 4\e^{-\frac{1}{2}R^2}.
  \]
\end{remark}
 
\section{Proof of Theorem~\ref{th:chaos}}\label{se:chaos}

\emph{Proof of the first part of Theorem \ref{th:chaos}.} It is a consequence
of~\eqref{eq:cirlaw} and of the following theorem. Indeed, by Lebesgue's
dominated convergence, ~\eqref{eq:cirlaw} implies that $\dE F(\mu^N)$
tends to $F(\mu_{\infty})$ for every continuous and bounded function $F :
\cP(E) \to \mathbb R.$ In other words, (i) holds in
Theorem~\ref{th:theoA}, whence (ii), which is exactly the first part of
Theorem \ref{th:chaos}.

\begin{theorem}[Characterizations of chaoticity]\label{th:theoA}
  Let $E$ be a Polish space and $\cP(E)$ be the Polish space of Borel
  probability measures on $E$ endowed with the weak convergence topology. Let
  $\mu$ be an element of $\cP(E)$ and let $(P^N)_N$ a sequence of exchangeable
  probability measures on $E^N$. Let us define the random empirical measure
  \[
  \mu^N= \displaystyle \frac{1}{N} \sum_{i=1}^N \delta_{X_i}
  \]
  where $(X_1, \dots, X_N)$ has law $P^N.$ Then the following properties are
  equivalent:
  \begin{itemize}
  \item[(i)] the law of $\mu^N$ converges to $\delta_{\mu}$ weakly in
    $\cP (\cP(E));$
  \item[(ii)] for any fixed $k \leq N$ the $k$-th dimensional marginal
    distribution $P^{k,N}$ of $P^N$ converges weakly in $\cP(E^k)$ to
    the product probability measure $\mu^{\otimes k};$
  \item[(iii)] the $2$-nd dimensional marginal $P^{2,N}$ of $P^N$ converges to
    $\mu^{\otimes 2}$ weakly in $\cP(E^2).$
  \end{itemize}
\end{theorem}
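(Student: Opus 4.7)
The plan is to prove the chain of implications (i)$\Rightarrow$(ii)$\Rightarrow$(iii)$\Rightarrow$(i). The implication (ii)$\Rightarrow$(iii) is immediate by specializing $k=2$, so the real content lies in (i)$\Rightarrow$(ii) and (iii)$\Rightarrow$(i). The statement is a classical equivalence (going back to Sznitman's formulation of propagation of chaos), and the strategy rests on the fact that a probability measure on a Polish space is determined by countably many bounded continuous test functions, combined with a variance calculation allowed by exchangeability.

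For (i)$\Rightarrow$(ii), I would fix bounded continuous functions $\phi_1,\dots,\phi_k:E\to\dR$ and consider the map $F:\cP(E)\to\dR$ defined by $F(\nu)=\prod_{i=1}^k\int\phi_i\,\dd\nu$. Since $\nu\mapsto\int\phi_i\,\dd\nu$ is continuous for the weak topology, $F$ is bounded and continuous, so (i) gives $\dE[F(\mu^N)]\to F(\mu)=\prod_{i=1}^k\int\phi_i\,\dd\mu$. A straightforward combinatorial expansion together with exchangeability shows
\[
\BABS{\dE[F(\mu^N)]-\int_{E^k}\phi_1\otimes\cdots\otimes\phi_k\,\dd P^{k,N}}
\leq\frac{C_k\prod_i\NRMS{\phi_i}_\infty}{N},
\]
because out of the $N^k$ terms in the expansion of the product, $N(N-1)\cdots(N-k+1)$ correspond to distinct indices and, by exchangeability, each equals $\int\phi_1\otimes\cdots\otimes\phi_k\,\dd P^{k,N}$. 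Letting $N\to\infty$ gives convergence of $\int\phi_1\otimes\cdots\otimes\phi_k\,\dd P^{k,N}$ to $\prod_i\int\phi_i\,\dd\mu$, which, as the linear span of such product test functions is convergence-determining on $\cP(E^k)$, yields (ii).

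For (iii)$\Rightarrow$(i), the key is a second-moment computation. Fix a bounded continuous $\phi:E\to\dR$ and write, by exchangeability,
\[
\dE\SBRA{\Bigr(\int\phi\,\dd\mu^N\Bigr)^{\!2}}
=\frac{N-1}{N}\int\phi(x)\phi(y)\,\dd P^{2,N}(x,y)
+\frac{1}{N}\int\phi^2\,\dd P^{1,N}.
\]
By (iii) (which contains convergence of $P^{1,N}$ via projection), the right-hand side converges to $(\int\phi\,\dd\mu)^2$, while $\dE[\int\phi\,\dd\mu^N]=\int\phi\,\dd P^{1,N}\to\int\phi\,\dd\mu$. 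Combining these gives $\mathrm{Var}(\int\phi\,\dd\mu^N)\to0$, so $\int\phi\,\dd\mu^N\to\int\phi\,\dd\mu$ in $L^2$, and in particular in probability.

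It remains to upgrade this test-function-wise convergence into weak convergence of the law of $\mu^N$ to $\de_\mu$ in $\cP(\cP(E))$. This is the main (but standard) obstacle. Since $E$ is Polish, one can fix a countable family $(\phi_m)_{m\geq1}$ of bounded continuous functions that is convergence-determining on $\cP(E)$ and metrizes the weak topology on $\cP(E)$ via $d(\nu,\nu')=\sum_m 2^{-m}\min(1,|\int\phi_m\,\dd\nu-\int\phi_m\,\dd\nu'|)$. The previous step applied to each $\phi_m$ and a diagonal extraction argument yield a subsequence along which $\int\phi_m\,\dd\mu^N\to\int\phi_m\,\dd\mu$ almost surely for every $m$, hence $\mu^N\to\mu$ weakly a.s.\ along this subsequence; equivalently $d(\mu^N,\mu)\to0$ in probability, which is nothing but weak convergence of $\mathrm{Law}(\mu^N)$ to $\de_\mu$ in $\cP(\cP(E))$. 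Since this holds for every subsequence of the original sequence, the full sequence converges, proving (i).
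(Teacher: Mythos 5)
Your proof of (iii)\,$\Rightarrow$\,(i) follows essentially the same route as the paper: the variance computation from exchangeability giving $L^2$ convergence of $\int\phi\,\dd\mu^N$, then a countable convergence-determining family, diagonal extraction of an a.s.\ convergent subsequence, dominated convergence, and the subsequence principle in the metrizable space $\cP(\cP(E))$. You additionally spell out (i)\,$\Rightarrow$\,(ii) via the combinatorial expansion with product test functions, which the paper delegates to references; that part is correct and a welcome completion. One small caution on the wording: a.s.\ convergence along a single extracted subsequence is not ``equivalently $d(\mu^N,\mu)\to0$ in probability'' on its own -- what rescues the argument is your final sentence invoking the fact that from \emph{every} subsequence one can extract such a further subsequence, which is the standard criterion for convergence in a metric space; it would be cleaner to run that sub-subsequence argument explicitly (as the paper does) rather than routing through ``in probability.''
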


\begin{proof}[Proof of Theorem \ref{th:theoA}]
  Theorem~\ref{th:theoA} is stated for instance in \cite[p.~260]{MR799949},
  \cite[Prop.\ 4.2]{MR1431299} and \cite[Prop.\ 2.2]{MR1108185}, but with a
  sketchy proof that (iii) implies (i). For the reader's convenience, we
  detail this proof when $E$ satisfies the following property : there exists a
  countable subset $D$ of the set $C_b(E)$ of continuous and bounded functions
  $E \to \mathbb R,$ such that for $(\mu_n)_n, \mu$ in $\cP(E)$, it
  holds $\int \phi \, d\mu_n \to \int \phi \, d\mu$ for any $\phi$ in $C_b(E)$
  as soon as it holds for any $\phi$ in $D$. For instance this property holds
  when $E$ is the Euclidean space.

  Since $\cP ( \cP (E))$ is metrizable, it is enough to check
  that for any sequence $(N_k)_k$ there exists a subsequence $(N_{k_j})_j$
  such that the law of $\mu^{N_{k_j}}$ converges to $\delta_{\mu}$. But, by
  expanding the square, exchangeability and (iii),
  \[
  \dE\left(\Big\vert \int_E \phi \, d \mu^{N_k} %
    - \int_E \phi \, d\mu \Big\vert^2\right) \to 0, \qquad k \to + \infty
  \]
  for any $\phi$ in $C_b(E)$ and hence in $D.$ Hence for any such $\phi$ there
  exists a subsequence still denoted $(N_{k_j})_j$ such that $\int \phi \, d
  \mu^{N_{k_j}} \to \int \phi \, d\mu$ almost surely. Now, by a diagonal
  extraction argument, we can build another subsequence $(N_{k_j})_j$ such
  that, almost surely, $\int \phi \, d\mu^{N_{k_j}} \to \int \phi \, d\mu$ for
  any $\phi$ in $D.$ By definition of $D$, this implies that, almost surely,
  $\mu^{N_{k_j}}$ converges to $\mu$ in the metric space $\cP(E)$. It
  follows that the law of $\mu^{N_{k_j}}$ converges to $\delta_{\mu}$ by the
  Lebesgue dominated convergence theorem. Hence (i) since $\cP (
  \cP (E))$ is metrizable.
 \end{proof}

 \emph{Proof of the second part of Theorem \ref{th:chaos}}. We first describe
 the behavior of the one-marginal density function $\vphi^{1,N}$. From
 ~\eqref{eq:phikn} it is given by
  \[
  \vphi^{1,N}(z)=
  \frac{\e^{-N|z|^2}}{\pi}\e_N(N|z|^2),\quad z\in\dC,
  \]
  where $\e_N(w):=\sum_{\ell=0}^{N-1}w^\ell/\ell!$ is the truncated
  exponential series. Then, pointwise in $\mathbb C$,
  \begin{equation}\label{eq:cvphi1N}
    \vphi^{1,N}(z)  \to %
    \frac{1}{\pi} \Big(\mathbf{1}_{|z|<1} + \frac{1}{2}  \mathbf{1}_{|z|=1} \Big), \qquad N \to \infty.
 \end{equation}
 Namely, by rotational invariance, it suffices to consider the case $z = r
 >0$. Next, if $Y_1, \dots, Y_N$ are i.i.d. random variables following the
 Poisson distribution of mean $r^2$, then
\[
\mathrm{e}^{-N r^2}\mathrm{e}_N(Nr^2) =\mathbb{P}(Y_1+\cdots+Y_N<N) =\mathbb{P}\left(\frac{Y_1+\cdots+Y_N}{N}<1\right).
\]
Now, as $N\to\infty$, $\frac{Y_1 + \dots + Y_N}{N} \to r^2$ almost surely by
the law of large numbers, and thus the right-hand side above tends to $0$ if
$r>1$ and to $1$ if $r<1$. In other words
\[
\mathrm{e}^{-N r^2}\mathrm{e}_N(Nr^2) \to \mathbf{1}_{r < 1}
\]
provided $r \neq 1.$ For $r=1$ by the central limit theorem we get
\[
\mathbb{P}\left(\frac{Y_1+\cdots+Y_N}{N}<1\right) = \mathbb{P}\left(\frac{Y_1+\cdots+Y_N - N}{\sqrt{N}}<0\right) \to\frac{1}{2}.
\]

In fact, the convergence in \eqref{eq:cvphi1N} holds uniformly on compact sets
outside the unit circle $\vert z \vert = 1$, as shown in Lemma~\ref{le:exp}
below. It cannot hold uniformly on arbitrary compact sets of $\dC$ since the
pointwise limit is not continuous on the unit circle.

\bigskip

We now turn to the two-marginal density function $\vphi^{2,N}$.
By~\eqref{eq:phikn} it is given by
  \begin{align}
  \vphi^{2,N}(z_1,z_2)
  &=
  \frac{N}{N-1}
  \frac{\e^{-N(|z_1|^2+|z_2|^2)}}{\pi^2}
  \big( \e_N(N|z_1|^2)\e_N(N|z_2|^2)-|\e_N(Nz_1\OL{z}_2)|^2 \big)\nonumber\\
  &=\frac{N}{N-1}\vphi^{1,N}(z_1)\vphi^{1,N}(z_2)
  -
  \frac{N}{N-1}
  \frac{\e^{-N(|z_1|^2+|z_2|^2)}}{\pi^2}
  |\e_N(Nz_1\OL{z}_2)|^2\label{eq:2points}
  \end{align}
    for every $z_1,z_2\in\dC$.
    
  It follows that for any $N\geq2$ and $z_1,z_2\in\dC$,
  \begin{align}\label{defDelta}
  \De_N(z_1,z_2)
  &:= \vphi^{2,N}(z_1,z_2)-\vphi^{1,N}(z_1)\vphi^{1,N}(z_2)\nonumber\\
  &= \frac{1}{N-1}\vphi^{1,N}(z_1)\vphi^{1,N}(z_2) 
  -\frac{N}{N-1}\frac{\e^{-N(|z_1|^2+|z_2|^2)}}{\pi^2}|\e_N(Nz_1\OL{z}_2)|^2.
  \end{align}
  
  \smallskip
  In particular, using
  $\vphi^{2,N}\geq0$ for the lower bound,
  \[
  -\vphi^{1,N}(z_1)\vphi^{1,N}(z_2)
  \leq
  \De_N(z_1,z_2)
  \leq
  \frac{1}{N-1}\vphi^{1,N}(z_1)\vphi^{1,N}(z_2).
  \]
  From this and Lemma \ref{le:exp} we first deduce that for any compact subset
  $K$ of $\{z\in\dC:|z|>1\}$
  \[
  \lim_{N\to\infty}
  \sup_{\substack{z_1\in\dC\\z_2\in K}}|\De_N(z_1,z_2)|
  =
  \lim_{N\to\infty}
  \sup_{\substack{z_1\in K\\z_2\in\dC}}|\De_N(z_1,z_2)|
  =0.
  \]
  
  \smallskip
To conclude the proof of  Theorem  \ref{th:chaos} it remains to show that $\De_N(z_1,z_2)\to0$ as $N\to\infty$ when $z_1$ and $z_2$ are in compact subsets 
of $|z_1| < 1, |z_2| < 1$. In this case
  $|z_1\OL{z}_2|\leq1$, and Lemma \ref{le:exp} gives
  \[
  |\e_N(Nz_1\OL{z}_2)|^2
  \leq 2\e^{2N\Re(z_1\OL{z}_2)}
  +
  2r_N^2(z_1\OL{z}_2).
  \]
  Next, using the elementary identity
  $2\Re(z_1\OL{z}_2)=|z_1|^2+|z_2|^2-|z_1-z_2|^2$, we get
  \begin{equation}\label{enrn}
  \e^{-N(|z_1|^2+|z_2|^2)}|
  \e_N(Nz_1\OL{z}_2)|^2
  \leq
  2\e^{-N|z_1-z_2|^2}
  +2\e^{-N(|z_1|^2+|z_2|^2)}r_N^2(z_1\OL{z}_2).
  \end{equation}
  Since $|z_1\OL{z}_2|\leq1$, the formula for $r_N$ in Lemma \ref{le:exp} gives
  \[
  \e^{-N(|z_1|^2+|z_2|^2)}r^2_N(z_1\OL{z}_2) \leq
  \e^{-N(|z_1|^2+|z_2|^2-2-\log|z_1|^2-\log|z_2|^2)} \frac{(N+1)^2}{2\pi N}.
  \]
  Using \eqref{defDelta}, \eqref{enrn} and the bounds $\varphi^{1,N} \leq
  1/\pi$ and $u-1 - \log u >0$ for $0<u<1$, it follows that $\Delta_N(z_1,
  z_2)$ tends to $0$ as $N\to\infty$ uniformly in $z_1,z_2$ on compact subsets
  of
  \[
  \{(z_1,z_2)\in\dC^2:|z_1|<1,|z_2|<1,z_1\neq z_2\}.
  \]
  This achieves the proof of Theorem \ref{th:chaos}.
 

\begin{lemma}[Exponential series]\label{le:exp}
  Let $\e_N(w):=\sum_{\ell=0}^{N-1}w^\ell/\ell!$ denote the truncated
  exponential series. For every $N\geq1$ and $z\in\dC$,
  \[
  |\e_N(Nz)-\e^{Nz}\mathbf{1}_{|z|\leq1}|\leq r_N(z)
  \]
  where
  \[
  r_N(z):=
  \frac{\e^N}{\sqrt{2\pi N}}|z|^N\PAR{\frac{N+1}{N(1-|z|)+1}
    \mathbf{1}_{|z|\leq1}
    +\frac{N}{N(|z|-1)+1}\mathbf{1}_{|z|>1}}.
  \]
  In particular, for any compact subset
  $K\subset\dC\setminus\{z\in\dC:|z|=1\}$,
  \[
  \lim_{N\to\infty}\sup_{z\in K}
  \ABS{\vphi^{1,N}(z)-\frac{\mathbf{1}_{|z|\leq1}}{\pi}} %
  = \pi^{-1}\lim_{N\to\infty}\sup_{z\in K}
  \ABS{\e^{-N|z|^2}\e_N(N|z|^2)-\mathbf{1}_{|z|\leq1}} =0.
  \]
\end{lemma}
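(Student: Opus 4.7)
The plan is to reduce everything to standard tail/truncation estimates for the modulus of the exponential series, using the unimodal structure of the terms $u_\ell := N^\ell|z|^\ell/\ell!$. Since $u_{\ell+1}/u_\ell = N|z|/(\ell+1)$ is decreasing in $\ell$, the maximizer of $\ell\mapsto u_\ell$ is located near $\ell = N|z|$; hence the natural splitting of the lemma into $|z|\leq 1$ (where the maximizer lies in $\{0,\ldots,N-1\}$ and the relevant object is the tail beyond $N$) and $|z|>1$ (where the maximizer lies beyond $N-1$ and the relevant object is the partial sum itself).

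In the region $|z|\leq 1$, I would estimate
\[
|\e^{Nz}-\e_N(Nz)|\leq \sum_{\ell\geq N}u_\ell,
\]
noting that for $\ell\geq N$ the ratio $u_{\ell+1}/u_\ell=N|z|/(\ell+1)\leq N|z|/(N+1)<1$, so the tail is bounded by its first term times the geometric series:
\[
\sum_{\ell\geq N}u_\ell\leq u_N\cdot\frac{N+1}{N(1-|z|)+1}.
\]
In the region $|z|>1$, I would instead bound $|\e_N(Nz)|\leq\sum_{\ell=0}^{N-1}u_\ell$; here $u_\ell$ is strictly increasing on $\{0,\dots,N-1\}$, so I would run the same argument in reverse from the maximum term $u_{N-1}$, using $u_{\ell-1}/u_\ell = \ell/(N|z|)\leq (N-1)/(N|z|)<1$, which yields
\[
\sum_{\ell=0}^{N-1}u_\ell\leq u_{N-1}\cdot\frac{N|z|}{N(|z|-1)+1}.
\]
In both cases, the identity $u_N=(N^N/N!)|z|^N$ (respectively $u_{N-1}=(N^N/N!)|z|^{N-1}$, using $N^{N-1}/(N-1)!=N^N/N!$) combined with the Stirling lower bound $N!\geq\sqrt{2\pi N}(N/e)^N$ produces the prefactor $\e^N|z|^N/\sqrt{2\pi N}$, and the two estimates assemble into the claimed $r_N(z)$.

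For the uniform convergence statement, I would apply the first estimate with $w=|z|^2$ in place of $z$ and recognize the dominant factor as $(|z|^2 \e^{1-|z|^2})^N$. The function $x\mapsto x\e^{1-x}$ attains its maximum $1$ on $(0,\infty)$ only at $x=1$ and is strictly less than $1$ elsewhere; consequently, on any compact $K\subset\dC\setminus\{|z|=1\}$, one has $|z|^2\e^{1-|z|^2}\leq\kappa<1$ uniformly on $K$. This exponential decay in $N$ dominates both the polynomial denominator $N(1-|z|^2)+1$ (or $N(|z|^2-1)+1$) and the factor $1/\sqrt{N}$, yielding the desired uniform limit. The only step that requires care is the explicit tracking of constants so that the geometric-series bound produces exactly the denominators $N(1-|z|)+1$ and $N(|z|-1)+1$ stated in the lemma; this is achieved by bounding the ratio $u_{\ell+1}/u_\ell$ at the endpoint $\ell=N$ (respectively $\ell=N-1$) rather than at a generic index, and is otherwise routine.
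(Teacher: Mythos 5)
Your proof is correct and follows essentially the same route as the paper: in both regions you bound the tail (respectively the partial sum) by its extremal term times a geometric series, bounding the term ratio at the boundary index to obtain exactly the denominators $N(1-|z|)+1$ and $N(|z|-1)+1$, and then invoke the Stirling lower bound $N!\geq\sqrt{2\pi N}(N/\e)^N$ to produce the prefactor $\e^N/\sqrt{2\pi N}$. The paper leaves the ``in particular'' uniform-convergence claim as an immediate consequence, whereas you make the dominant factor $(|z|^2\e^{1-|z|^2})^N$ and the maximum of $x\mapsto x\e^{1-x}$ explicit, which is a welcome but minor elaboration rather than a different argument.
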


\begin{proof}[Proof of Lemma \ref{le:exp}]
  As in Mehta \cite[Ch.~15]{MR2129906}, for every $N\geq1$, $z\in\dC$, if
  $|z|\leq N$ then
  \[
  \ABS{\e^{z}-\e_N(z)}
  =\ABS{\sum_{\ell=N}^\infty\frac{z^\ell}{\ell!}}
  \leq\frac{|z|^N}{N!}\sum_{\ell=0}^\infty\frac{|z|^\ell}{(N+1)^\ell}
  =\frac{|z|^N}{N!}\frac{N+1}{N+1-|z|},
  \]
  while if $|z|>N$ then
  \[
  |\e_N(z)|
  \leq \sum_{\ell=0}^{N-1}\frac{|z|^\ell}{\ell!}
  \leq \frac{|z|^{N-1}}{(N-1)!}\sum_{\ell=0}^{N-1}\frac{(N-1)^{\ell}}{|z|^\ell}
  \leq\frac{|z|^{N-1}}{(N-1)!}\frac{|z|}{|z|-N+1}.
  \]
  Therefore, for every $N\geq1$ and $z\in\dC$,
  \[
  |\e_N(Nz)-\e^{Nz}\mathbf{1}_{|z|\leq1}|
  \leq
  \frac{N^N}{N!}\PAR{|z|^N\frac{N+1}{N+1-|Nz|}\mathbf{1}_{|z|\leq1}
    +|z|^{N-1}\frac{|Nz|}{|Nz|-N+1}\mathbf{1}_{|z|>1}}.
  \]
  It remains to use the Stirling bound $\sqrt{2\pi N}N^N\leq N!\e^N$ to get
  the first result.
\end{proof}

\subsection*{Acknowledgments} J.F. thanks the hospitality and support of
Université Paris-Dau\-phine via an invited professor position. This work was
partly carried out during a visit to CIRM in Marseille; it is a pleasure for
the authors to thank this institution for its kind hospitality and
participants for discussions on this and related topics, notably Joseph Lehec
and Camille Tardif. The article benefited from a very useful and relevant
anonymous report. The authors acknowledge partial support from the STAB
ANR-12-BS01-0019, Fondecyt 1150570, Basal-Conicyt CMM, Millennium Nucleus
NC120062, and EFI ANR-17-CE40-0030 grants.

\renewcommand{\MR}[1]{}
\bibliographystyle{smfplain}
\bibliography{coulsim}

\end{document}